\newcommand{\REMOVED}[1]{ }
\newcommand{\todo}[1]{
  \textcolor{red}{\footnotesize \textsf{#1}}
}
\renewcommand{\todo}[1]{} 
\newcommand{\topic}[1]{
  \textcolor{Emerald}{\footnotesize \textsf{#1}}
}
\renewcommand{\topic}[1]{} 
\newtheorem{invariant}[theorem]{Invariant}
\title{Lower Bounds for Graph Exploration\\ Using Local Policies}
\author{Aditya Kumar Akash\inst{1} \and S\'andor~P.~Fekete\inst{2} \and   Seoung Kyou Lee\inst{3} \and Alejandro L{\'o}pez-Ortiz\inst{4} \and Daniela Maftuleac\inst{4} \and James McLurkin\inst{3}}
\institute{
IIT Bombay, Mumbai, India.
\email{adityakumarakash@gmail.com}
\and
TU Braunschweig, Braunschweig, Germany.
        \email{s.fekete@tu-bs.de}%
\and
        Rice University, Houston, TX, USA.
        \email{sl28,jmclurkin@rice.edu}%
\and
         Univ. of Waterloo, Waterloo, ON, Canada.
        \email{alopez-o,dmaftule@uwaterloo.ca}%
}
\begin{document}
\maketitle
\begin{abstract}
We give lower bounds for various natural node- and edge-based local strategies for exploring a graph.
We consider this problem both in the setting of an arbitrary graph
as well as the abstraction of a geometric exploration of a space
by a robot, both of which have been extensively studied. We consider
local exploration policies that use time-of-last-visit or alternatively
least-frequently-visited local greedy strategies to select the next step
in the exploration path. Both of these strategies were previously considered by Cooper et al. (2011)
for a scenario in which counters for the last visit or visit frequency are attached to
the edges.  
In this work we consider
the case in which the counters are associated with the nodes, which for the case
of dual graphs of geometric spaces could be argued to be intuitively more natural and likely more efficient.	
Surprisingly, these alternate strategies
give worst-case superpolynomial/exponential time for exploration, whereas the least-frequently-visited
strategy for edges has a polynomially bounded
exploration time, as shown by Cooper et al. (2011).
\end{abstract}



%

\section{Introduction}
\label{sec:Introduction}

\topic{Introduction}
We consider the problem of a mobile agent or robot exploring an arbitrary graph.
This is a well-studied problem in the literature, both in geometric and combinatorial
settings. The robot or agent may wish to explore an arbitrary graph, e.g.
a social network or the graph derived from the exploration of a geometric space.
In the latter case, this is often modeled as an exploration task
in the dual graph, where nodes correspond to rooms or regions, and edges corresponds
to paths from one region to another~\cite{bfk+-tueur-13,flm+-prssr-14,lbf+-estmr-14}.
In either setting, the goal is to explore every node in the graph (i.e., a
corresponding region in space) in the smallest possible worst-case time.
More formally, the question is this: \\
\emph{Given an unknown graph $G$ and a local exploration
policy, what is the time when the last node is visited as a function of
the size of the graph $G$}?



There are several natural local strategy candidates for exploring a graph.
We consider only strategies that use a local policy at each node
for selecting the immediate neighbor that is visited next.
The selection of neighbor can be done using one of the following policies:
(1) {\em Least Recently Visited vertex} (LRV-v), (2) {\em Least Recently Visited edge} (LRV-e),
(3) {\em Least Frequently Visited vertex} (LFV-v), and (4) {\em Least Frequently Visited edge} (LFV-e).

In the strategies above, we assume that each vertex or node holds an associated value,
reflecting the last time it was visited (for the case of least recently visited strategies)
or a counter of the total times it has been explored (for least frequently visited
policies). Then the robot selects the neighboring vertex or adjacent edge
with lowest value, i.e., oldest time stamp or least frequently visited.

Because we are hoping to minimize the time to visit every vertex (the dual of a region
in the geometric space), it would seem
more natural to consider first the LRV-v strategy or failing that, the LFV-v strategy.
However, up until now, the only strategies with known theoretical worst-case
bounds are LRV-e and LFV-e.

However, it has been an open problem whether these natural node-based policies
are efficient. In an experimental study~\cite{mlf+-lpeptr-15}, we consider the task of patrolling
(i.e. repeatedly visiting) a
polygonal space that has been triangulated in a pre-established fashion.
This problem can be modelled as exploration of the dual graph of the triangulation.
This was the original motivation to study the problem of exploring graphs
in general, and the dual graphs of triangulation in particular.
In that paper we sketch an exponential lower
bound for LRV-v.
In this work, we give a superpolynomial lower bound for
LFV-v for exploring a graph. This is in sharp contrast to the edge case,
for which LFV-e has polynomially bounded exploration time. In particular,
we show that there exist a graph on $n$ vertices and $n-1$ edges corresponding to the dual of
a convex
decomposition of a polygon where the convex polygons are fat and of limited area and
such that the exploration time is $\Theta(n^{\sqrt{n}/2} )$ in the worst case.
In the process we show full proofs for
lower bounds for the LRV-v, sketched in the experimental study~\cite{mlf+-lpeptr-15},
and give lower bounds for the LRV-e and worst-case behavior for LFV-e in graphs of degree 3,
thus extending the results by Yanovski et al.~and Cooper et al.~\cite{cik+-drwug-11,yvb+-algo-03} which are shown only
for graphs of higher degree in the so-called ANT model. This model has also been studied by Bonato et al.~\cite{b+-waw-15}
with expected coverage time for random graphs.

\begin{table}[h]
\caption{Summary of results}
\centering
\begin{tabular}{|c|c|c|c|}
  \hline
   Local policy & Graph class & Lower bound & Upper bound \\
   \hline
   \multirow{2}{*}{LRV-v} & general graphs  & $\exp(\Theta(n))$~follows from  Thm~\ref{th:lb.LRV-v} & $\exp(\Theta(n))$ \\
   & duals of triangulations & $\exp(\Theta(n))$~Thm~\ref{th:lb.LRV-v} & $\exp(\Theta(n))$ \\
   \hline
  \multirow{2}{*}{LRV-e} & general graphs  & $\exp(\Theta(n))$~\cite{cik+-drwug-11} & $\exp(\Theta(n))$\\
  & duals of triangulations & $\Omega(n^2)$~Thm~\ref{th:lb.LRV-e} & $\exp(O(n))$ \\
  \hline
  \multirow{2}{*}{LFV-v} & general graphs & $\Omega(n^{\sqrt{n}/2})$~\cite{ksl+-ants-01,mcvw+-distr-05} & $O(n\,\delta^d)$~Thm~\ref{thm:LFVv-delta-d}\\
   & duals of triangulations & $\Omega(n^2)$~Thm~\ref{LFVv-lb} &  $O(n\,\delta^d)$~follows from Thm~\ref{thm:LFVv-delta-d}\\
  \hline
  \multirow{2}{*}{LFV-e} & general graphs & $\Omega(n^2)$ ~follows from  Thm~\ref{LFVe-lb}& $O(m\cdot d)$~\cite{cik+-drwug-11} \\
   & duals of triangulations & $\Omega(n^2)$~Thm~\ref{LFVe-lb} & $O(m\cdot d)$~Cor~\ref{co:ub.LFV-e} \\
   \hline
\end{tabular}
\label{table:results}
\end{table}

\paragraph{Related Work}
\label{sec:RelatedWork} We study policies that require only local information, which can be
maintained by simple devices.  The policy {\em Least Recently Visited}
is known to have worst-case exploration times that are exponential
in the size of the graph, as shown by Cooper et al.~\cite{cik+-drwug-11}.
More recently, the present authors (inspired by empirical considerations) studied LRV-v, LFV-v
and LFV-e in the context of robot swarms and studied the observed average case
using simulations.

The exponential lower bound for LFV-v was shown by Koenig et al.~\cite{ksl+-ants-01}
while Malpani et al. give exponential lower bounds for LRV-e on general graphs \cite{mcvw+-distr-05}.

\paragraph{Summary of Results} In this work we suggest that LFV-e should be the preferred
choice and complement this result by giving (1) an exponential lower bound for the
worst case for LRV-v of triangulations, (2) a quadratic lower bound for
the worst case for LRV-e of triangulations, (3) an exact bound on the maximum frequency
difference of two neighboring nodes in LFV-v, (4) a quadratic lower bound for LFV-v
in graphs of degree 3, (5) a quadratic lower bound for LFV-e
in graphs of degree 3 and, most importantly, (6) a superpolynomial lower bound for the
worst-case of LFV-v when the graph corresponds to a small convex decomposition of a polygon.

\section{Worst-Case Behavior of LRV-e and LRV-v}
\label{sec:LRV}
The worst-case behavior of LRV-e in arbitrary graphs can be
exponential in the number of nodes in the graph, provided we allow a maximum
degree of at least 4. That is, for every $n$, there exists a graph with $n$
vertices in which the largest exploration time for an edge is
$\exp(\Theta(n))$~\cite{cik+-drwug-11}.  Fig.~\ref{fig:reg_graph} illustrates one
such graph (with vertices of degree 4). The starting edge is leftmost in the graph
and the last edge to be visited is the rightmost one. The diamond-like subgraph is
such that when reached by a left-to-right path results in the path
not passing through to the edges on the right on every two-out-of-three
occasions. In this sense the gadget reflects back 2/3rds of all paths.

 If we connect
$\Theta(n)$ such gadgets in series, we will require a total of
$(3/2)^{\Theta(n)}$ paths,
starting from the left for at least one of them to reach the rightmost edge in the series.
\begin{figure}
\centering
\includegraphics[width=0.76\textwidth]{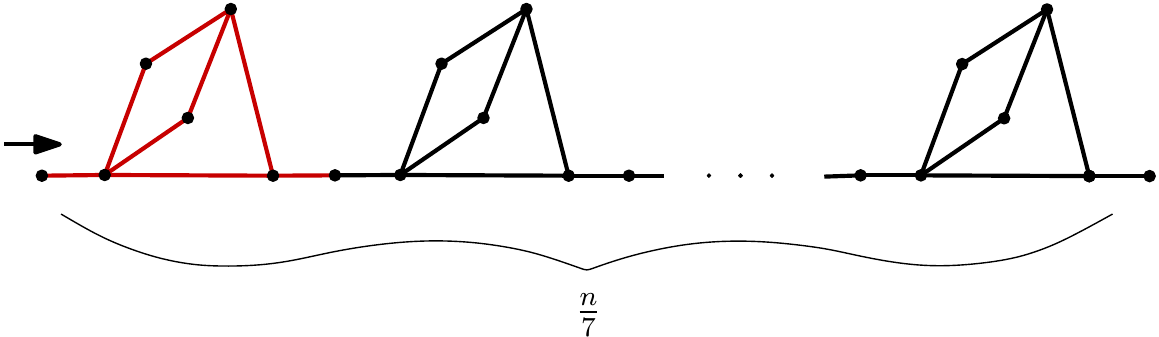}
\caption{Graph with $n$ vertices with a chain of $n/7$
gadgets. A single gadget is colored in red for illustration purposes. Exploring takes exponential time in the worst case~\cite{cik+-drwug-11}.} \label{fig:reg_graph}
\vspace{-3mm}
\end{figure}
Given that our scenario is based on visiting (dual) vertices, it is natural
to consider the worst-case behavior of LRV-v for the special class
of planar graphs of maximum degree 3 that can arise as duals of triangulations.
Until now, this has been an open problem. Moreover, it also makes sense to
consider the worst-case behavior of LRV-e for the same special graph class,
which is not covered by the work of Cooper et al.~\cite{cik+-drwug-11}.

\begin{theorem}
\label{th:lb.LRV-v}
There are dual graphs of triangulations(in particular, planar graphs with $n$ vertices of maximum degree 3), in which LRV-v
leads to a largest exploration time for a node that is exponential in $n$.
\end{theorem}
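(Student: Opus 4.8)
The plan is to mirror the diamond-chain construction for LRV-e (Fig.~\ref{fig:reg_graph}), but to replace the degree-$4$ reflector by a degree-$3$ gadget realizable as a small patch of triangles, so that the whole chain is the dual of a single triangulation. A preliminary observation guides the design: on any \emph{tree}, LRV-v behaves exactly like depth-first search, since an unvisited neighbour always carries the oldest (effectively $-\infty$) time stamp and is therefore selected, while a fully explored node backtracks to its parent; hence every node of a tree is first visited within $O(n)$ steps. Consequently any exponential instance \emph{must} contain cycles, and the exponential delay has to come from the robot being forced to wind around short cycles many times before the time stamps permit one more step to the right. I therefore build $G_n$ from a series of $m=\Theta(n)$ identical \emph{reflector} gadgets $R_1,\dots,R_m$ strung along a horizontal spine, with a dead-end cap on the left and the unique last-to-be-visited node $z$ hanging off the right of $R_m$.

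Each gadget is a short cycle with two \emph{ports}, a left port and a right port, each of degree $3$ (one spine edge plus two internal edges), together with two internal nodes of degree $2$; all nodes have degree at most $3$ and the gadget is a constant-size planar patch. The heart of the argument is the following local claim: there is a canonical equivalence class of time-stamp configurations on a gadget such that, whenever the robot enters a gadget in such a configuration, it leaves again after $O(1)$ steps, it exits to the \emph{left} (is reflected) on a constant fraction $1-p>\tfrac12$ of consecutive entries and to the \emph{right} (passes through) on the remaining fraction $p$, and in either case the gadget is returned to the same equivalence class. I would prove this by tracking the cyclic order in which the four gadget nodes acquire their time stamps and checking, case by case on the relative staleness of the two ports and the two interior nodes, that the oldest-neighbour rule routes the robot around the cycle and back out; the periodic pattern of exits (reflect, reflect, pass, $\dots$) yields the constant pass fraction $p$, exactly as the diamond yields the $2/3$ reflection fraction in the edge case.

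Granting the local claim, the global bound follows from a geometric recurrence. Let $N_i$ denote the number of left-entries into $R_i$ over the run. By the local claim each left-entry passes through with frequency $p$, so producing one pass-through of $R_i$ costs on the order of $1/p$ left-entries; since each pass-through of $R_{i-1}$ furnishes exactly one left-entry to $R_i$, reaching $R_m$ even once forces $N_1 \ge (1/p)^{\,m-1}$. The left cap is designed so that a robot reflected out of $R_1$ is always sent back to the right (that is the oldest available direction), so these $N_1$ entries are genuinely distinct events in time; since each costs at least one step, the first visit to $z$ cannot occur before time $(1/p)^{\Theta(m)}=\exp(\Theta(n))$. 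The matching $\exp(\Theta(n))$ upper bound follows from the bounded number of reachable (position, time-stamp-order) configurations. Because $G_n$ is planar with maximum degree $3$, a standard thickening (each degree-$3$ node becomes a triangle and each edge an adjacency, with low-degree triangles along the outer boundary) exhibits it as the dual of a triangulation, and the same instance proves the bound for general graphs.

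The main obstacle is the local claim, i.e.\ proving that the gadget is a \emph{stationary} reflector with a constant pass fraction. Two points make this delicate. First, adjacent gadgets are coupled: the residual time stamps that $R_{i-1}$ leaves behind when the robot crosses into $R_i$ feed back into $R_{i-1}$'s next interaction, so the invariant must be phrased so as to be preserved under one full pass-and-return and to be insensitive to the bounded time-stamp offset accumulated along the spine. Second, the construction must survive tie-breaking among equally stale neighbours; I plan to handle this either by fixing one adversarial tie-break rule and folding it into the case analysis, or by a small perturbation of the initial time stamps that removes all ties without changing the asymptotics. Establishing the preserved invariant across one full interaction is where essentially all of the work lies; once it is in place, the chain recurrence and the embedding into a triangulation are routine.
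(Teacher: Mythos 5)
Your high-level architecture is exactly the paper's: a chain of $\Theta(n)$ constant-size, maximum-degree-$3$ reflector gadgets realized as the dual of a triangulation of a polygon with holes (your own tree observation shows the instance must contain cycles, hence holes), a local claim that each gadget reflects a constant fraction of its left-entries in steady state, and a geometric recurrence giving $\exp(\Theta(n))$. The paper does this with a $9$-vertex component per link and the recursion $T_i\geq 22+2\,T_{i-1}$, verified by exhibiting the two alternating traversal paths of Fig.~\ref{LRV}. The problem is that the one step you explicitly defer --- the stationary-reflector claim --- is \emph{false} for the gadget you specify, and this is not a matter of missing case analysis but of the gadget being too small.

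Concretely, take your $4$-node gadget: ports $L$ and $R$ of degree $3$, internal nodes $a,b$ of degree $2$, i.e.\ a $4$-cycle on $L,a,R,b$ in this cyclic order, with one spine edge at $L$ and one at $R$; let $L'$ denote the left port of the next gadget. On the first left-entry the adversary can tie-break at $R$ toward the unvisited $b$ and obtain one reflection. On the second left-entry the robot is forced along $L\to a\to R$ (the stamps leave no choice) and at $R$ the neighbor $L'$ is still unvisited, hence oldest, so it passes through. Worse, consider any subsequent return from the right: the robot arrives at $R$ via $L'$ at some time $\tau$, so $L'$ carries stamp $\tau-1$; it then continues $R\to b\to L\to{}$exit, stamping $b$ with $\tau+1$; neither $b$ nor $L'$ is touched again before the next left-entry, at which point the robot reaches $R$ via $a$ and finds $L'$ strictly older than $b$, so it must pass through again. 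The internal nodes are refreshed on \emph{every} traversal while the forward spine neighbor is refreshed only on pass-throughs, so the forward neighbor is permanently the least-recently-visited neighbor of $R$: after its first entry the gadget passes the robot through every time. (The alternative $4$-node topology, with the two ports adjacent on the cycle, fails the same way, or even passes through on the first entry.) The chain is then crossed after $O(n)$ excursions and your construction yields only polynomial time; the recurrence $N_1\geq(1/p)^{m-1}$ has no gadget to rest on. Repairing this requires a strictly larger component with enough internal structure that the node guarding the forward edge keeps a fresh stamp across several consecutive entries --- which is precisely what the paper's $9$-vertex component and its two alternating internal paths are engineered to do.
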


\begin{proof}
Consider the graph $G_D$ with $n$ vertices in Fig.~\ref{GD-LRV}, which contains
$(n-1)/9$ identical components (each containing 9 vertices) connected in a chain.
This graph is the dual of the triangulation of the polygon in black lines shown in Fig.~\ref{triang-LRV}.
Observe that every vertex has degree at most 3.
We prove the claimed exponential time bound by recursively calculating the time taken to complete one cycle in the transition
diagram shown in Fig.~\ref{LRV}.

\begin{figure}[t]\centering
\includegraphics[width=0.9\linewidth]{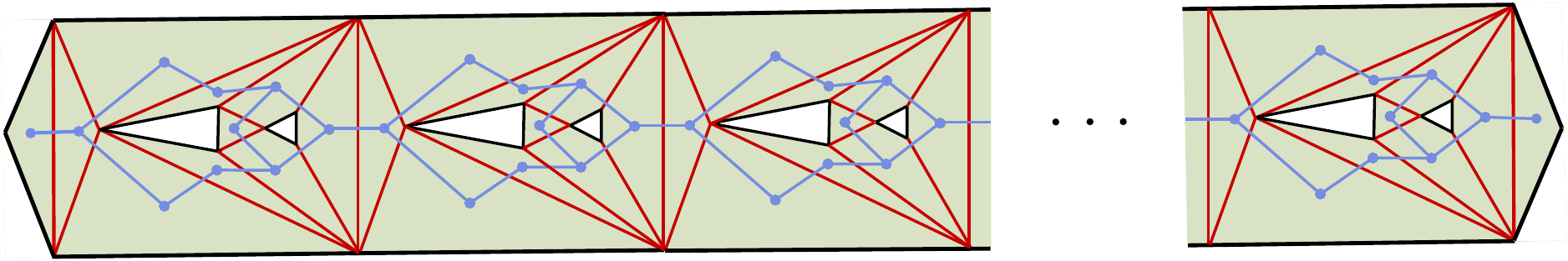}
\caption{This figure depicts (1) a hexagonal polygonal region with holes in black lines (2) its triangulation $G_P$ in red lines
and (3) the dual graph of the triangulation $G_D$ shown in blue lines.} \label{triang-LRV}
 \end{figure}

\begin{figure}[h]\centering
\includegraphics[width=0.7\linewidth]{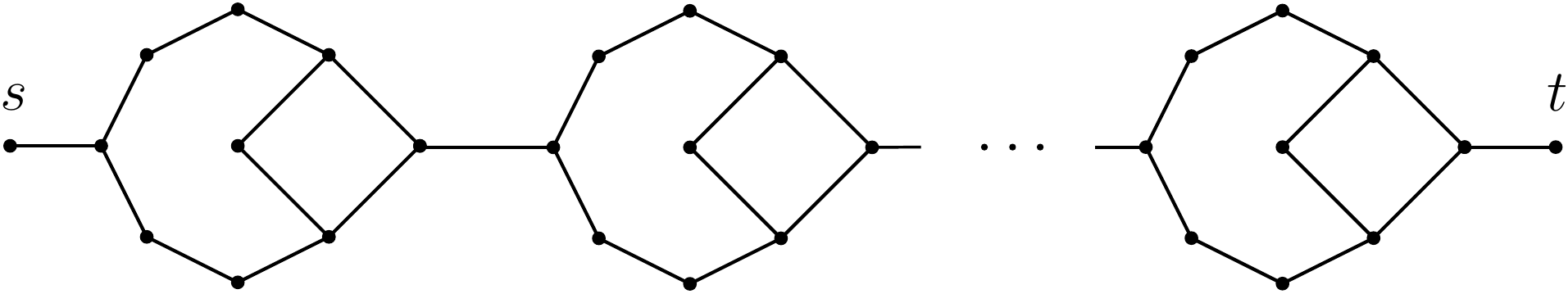}
\caption{The dual graph $G_D$ illustrated in blue in Fig.~\ref{triang-LRV}.} \label{GD-LRV}
 \end{figure}

\begin{figure}[h]\centering
 \includegraphics[width=0.47\textwidth]{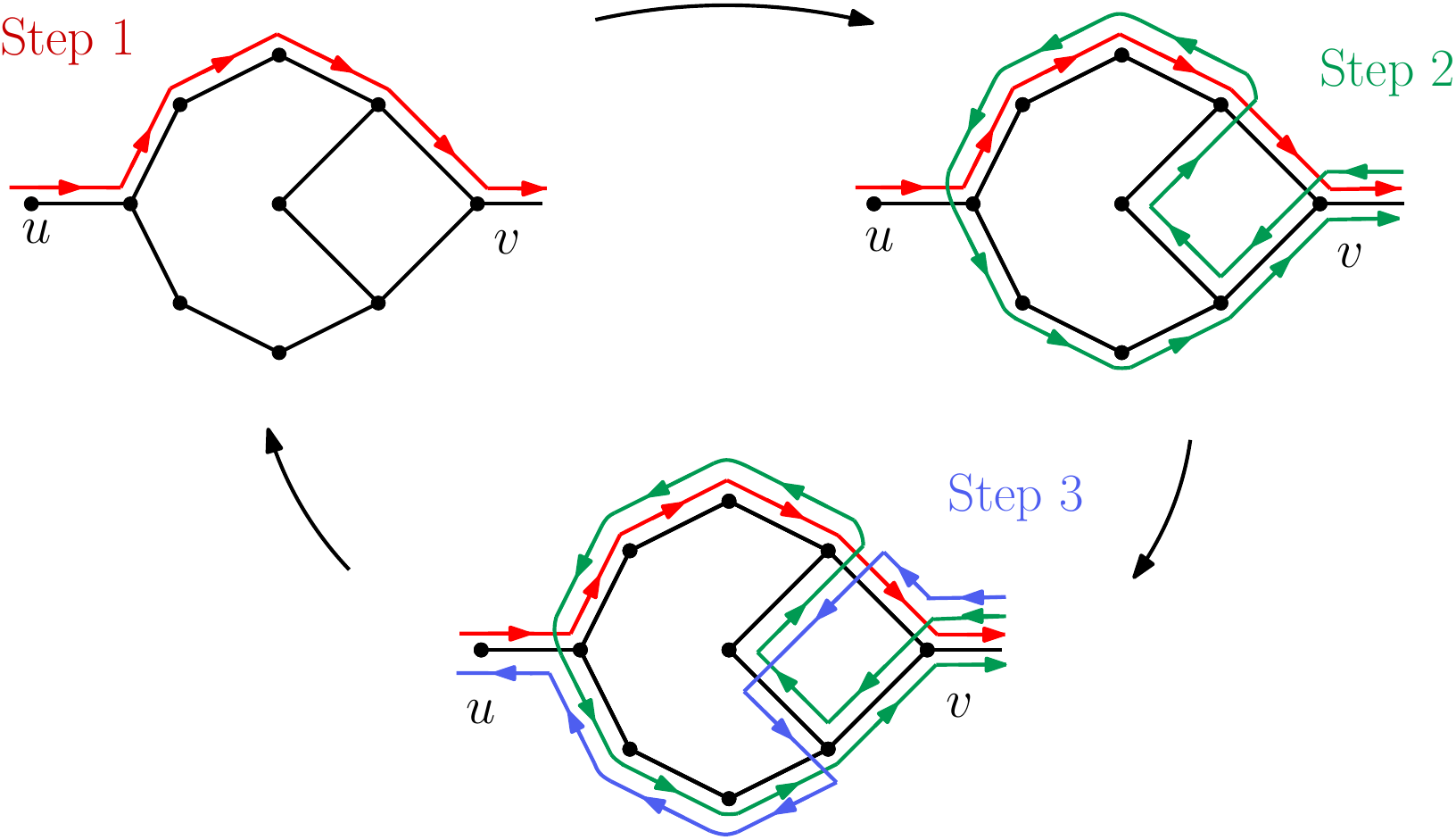}
 \includegraphics[width=0.47\textwidth]{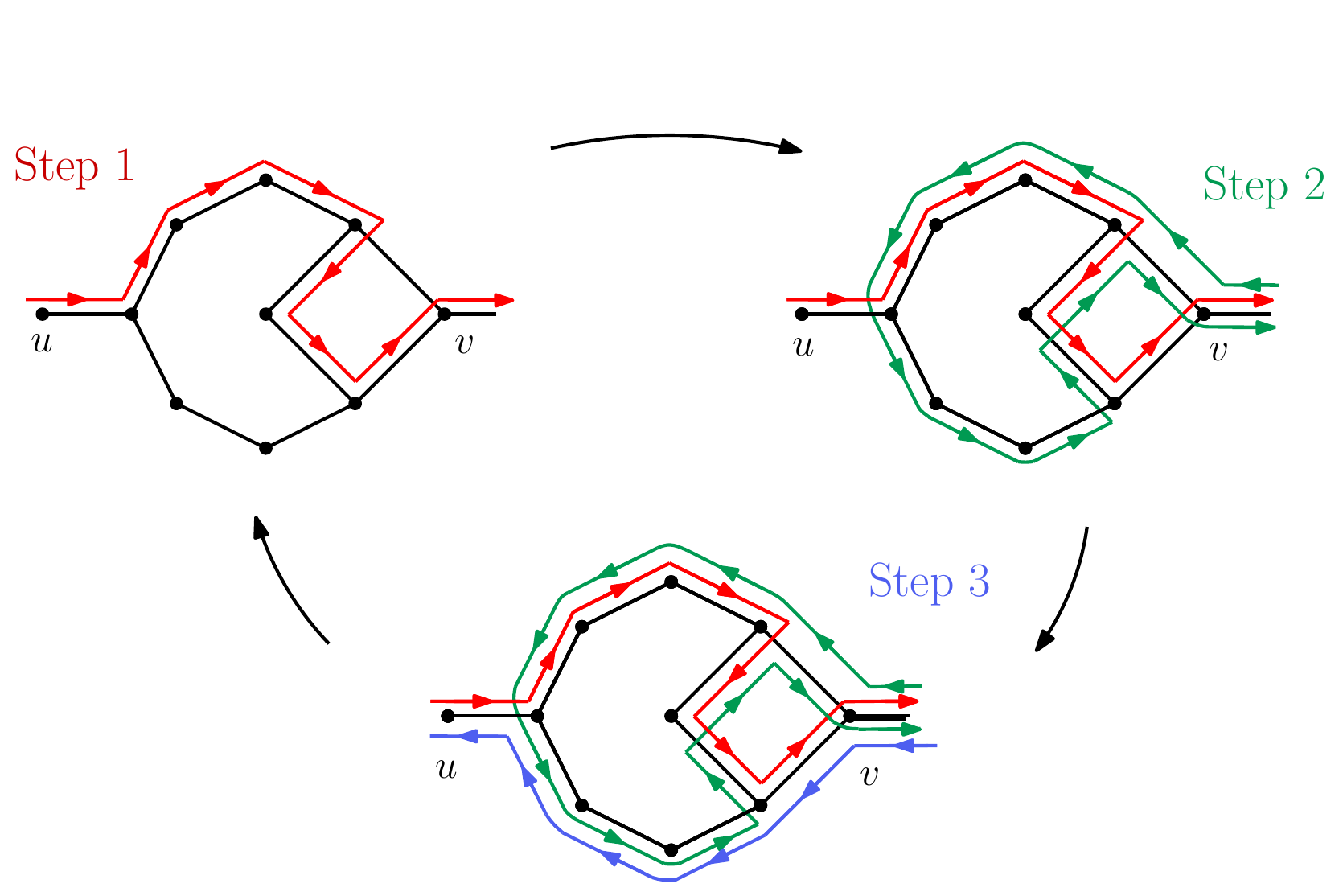}
\caption{Two possible LRV-v alternating paths on each component of the graph $G_D$.} \label{LRV}
 \end{figure}

We monitor the movement of a robot from this situation onwards.
Let $T_n$ denote the time taken to
complete one cycle of $G_D$, i.e., the time taken by a robot to start
from and return to the first vertex of the first component of $G_D$. Similarly,
let $T_i$ denote the cycle which starts on the leftmost vertex $s$ reaching the $i$ gadget on the left-to-right path
and back to $s$.
Hence the graph will be fully explored when we first reach the last component.
This requires three consecutive visits to the next-to-last (penultimate) component in the path, the first
two visits are reflected back to the starting node $s$, and the last goes through.

From the possible paths illustrated in Fig.~\ref{LRV}, we can observe that the vertex $u$ is visited only during the beginning and
end of the cycle, while the vertex $v$ is visited twice in this cycle. It is not hard to check that the summation of visits to all edges
in one component during one cycle is 22. Using this we can see a simple recursion as follows:
$$T_i \geq 22 + 2 \cdot T_{i-1},\quad T_0 = 0$$
Solving this equation, we get $T_{n} \geq 22 \cdot (2^{n}-1)$, and hence the last vertex $t$ in the path is visited
after at least $2\cdot T_{n-1}\geq 22 \cdot (2^{n}-2)$ steps, which is exponential in the number $n$ of nodes of graph $G_D$, as claimed.
\qed
\end{proof}

As it turns out, the lower bound for LRV-e in~\cite{cik+-drwug-11}, i.e., for time stamping vertices instead of edges,
also holds for graphs of max degree 3 as follows.
\begin{figure}[h]\centering
 \includegraphics[width=0.88\textwidth]{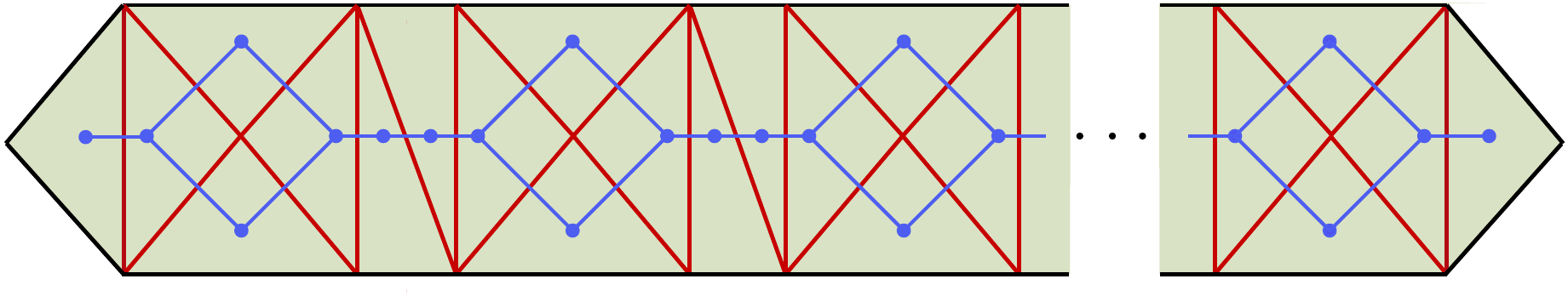}
\caption{This figure depicts (1) a hexagonal polygonal region in black lines (2) its triangulation $G_P$ in red lines
and (3) the dual graph of the triangulation $G_D$ shown in blue lines.} \label{triang}
 \end{figure}

\begin{figure}[h]\centering
\includegraphics[width=0.75\linewidth]{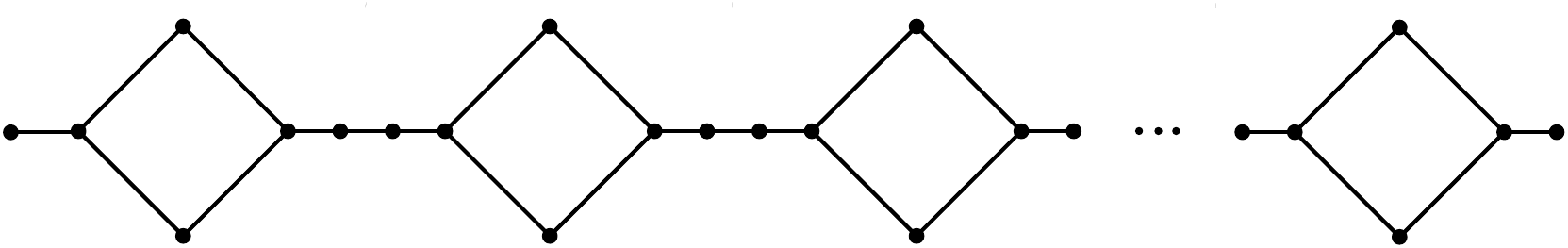}
\caption{The dual graph $G_D$ consisting of a chain of $n/6$ cycles from Fig. \ref{triang}.} \label{GD-LRVe}
 \end{figure}

\begin{theorem}
\label{th:lb.LRV-e}
There are dual graphs of triangulations (in particular, planar graphs with $n$ vertices of maximum degree 3),
in which LRV-e leads to a largest exploration time for a node
that is quadratic in $n$.
\end{theorem}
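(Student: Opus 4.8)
The plan is to reuse the chain structure already displayed in Figs.~\ref{triang} and~\ref{GD-LRVe}: a dual graph $G_D$ that is a chain of $k=n/6$ vertex-disjoint simple cycles, with consecutive cycles linked so that the maximum degree stays at most $3$ (the natural realization being a single bridge edge joining one vertex of $C_i$ to one vertex of $C_{i+1}$, each of which thereby has degree $3$). The robot starts at the leftmost vertex $s$ and the target is the rightmost vertex $t$, and the goal is to show that $t$ is first reached only after $\Omega(n^2)$ steps. Unlike the LRV-v construction of Theorem~\ref{th:lb.LRV-v}, where each gadget reflected a constant fraction of incoming traffic and produced the multiplicative recursion $T_i \geq 22 + 2\,T_{i-1}$, here the degree bound of $3$ rules out any multiplicative reflection; the lower bound will instead come from an additive effect that accumulates across the chain, in the spirit of an iterative-deepening traversal.

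First I would fix a consistent tie-breaking rule and trace the LRV-e dynamics through a single cycle gadget, carrying along as a bookkeeping device the relative order of the edge time-stamps incident to the two articulation vertices $b_i$ (left bridge) and $a_i$ (right bridge) of cycle $C_i$. The structural lemma I aim to establish is that the exploration frontier advances by at most one cycle per sweep: at the moment the robot first crosses the bridge into $C_{i+1}$, the edges on the path back toward $s$ have become the least recently visited ones, so the least-recently-visited rule forces the robot to retreat essentially all the way to $s$ before it can push the frontier forward again. Thus reaching $C_{i+1}$ for the first time costs a round trip whose length is $\Theta(i)$, the distance from the current frontier back to $s$ and out again.

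Granting this lemma the counting is immediate. Let $t_i$ be the first step at which $C_i$ is entered; the sweep structure gives $t_{i+1}-t_i=\Omega(i)$, equivalently the bridge leaving $C_i$ is traversed $\Omega(k-i)$ times before $t$ is reached. Summing the retreat/advance cost over all $k=\Theta(n)$ cycles yields
$$ t_k \;\geq\; \sum_{i=1}^{k} \Omega(i) \;=\; \Omega(k^2) \;=\; \Omega(n^2), $$
which is the claimed quadratic bound. I would then verify that the chain $G_D$ is genuinely realizable as the dual of a triangulation of a polygon with holes, which is exactly what Fig.~\ref{triang} exhibits for one cycle and which extends verbatim along the chain, so that the graph indeed lies in the stated class of planar degree-$3$ duals.

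The main obstacle I anticipate is the structural lemma itself, i.e., the time-stamp bookkeeping that certifies both that the frontier advances by exactly one cycle per sweep and that each advance forces a near-full retreat to $s$. The delicate point is to exclude the two behaviors that would break the bound: the robot jumping ahead by several cycles in a single pass (making the traversal only linear), and the robot stalling inside one cycle without advancing at all. Both should be controlled by proving an invariant stating that the least-recently-visited rule keeps the incident edge time-stamps at every articulation vertex in a strict rotating order, so that the bridge is selected precisely once each time the two cycle edges at that vertex have been refreshed; making this invariant precise and checking that it is restored after each passage through a gadget is where the real work lies.
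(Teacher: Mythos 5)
Your proposal is essentially the paper's own argument: the same chain of $n/6$ degree-$3$ cycle gadgets, the same ``first traversal reflects back to $s$, later traversals pass through'' mechanism, and the same summation $\sum_i \Theta(i)=\Theta(n^2)$. The structural lemma you flag as the remaining work is exactly the step the paper itself settles only by exhibiting the alternating traversal patterns of Fig.~\ref{pattern-LRVe} (together with adversarial tie-breaking on the initial zero time stamps), so your write-up is at the same level of rigor and follows the same route.
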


\begin{proof}
Consider the graph $G_D$ of Fig.~\ref{GD-LRVe} 
which consists of a chain of $n/6$ cycles of length 4 connected in series.
As illustrated in Fig.~\ref{pattern-LRVe}, each component is traversed
initially following the colored oriented paths from step 1 and further
alternating the paths from step $2k$ and $2k+1$, for $k$ positive integer.
When all nodes have time stamp zero we can choose to visit the nodes in any arbitrary
order.\footnote{In general, this property holds whenever there are several neighboring nodes
with the lowest time stamp. For example, in a star starting from the center we visit all neighboring
nodes in arbitrary order until all
of them have time stamp 1. At this point we can once again choose an arbitrary order to visit the
neighbors anew.}

 \begin{figure}\centering
 \includegraphics[width=0.59\textwidth]{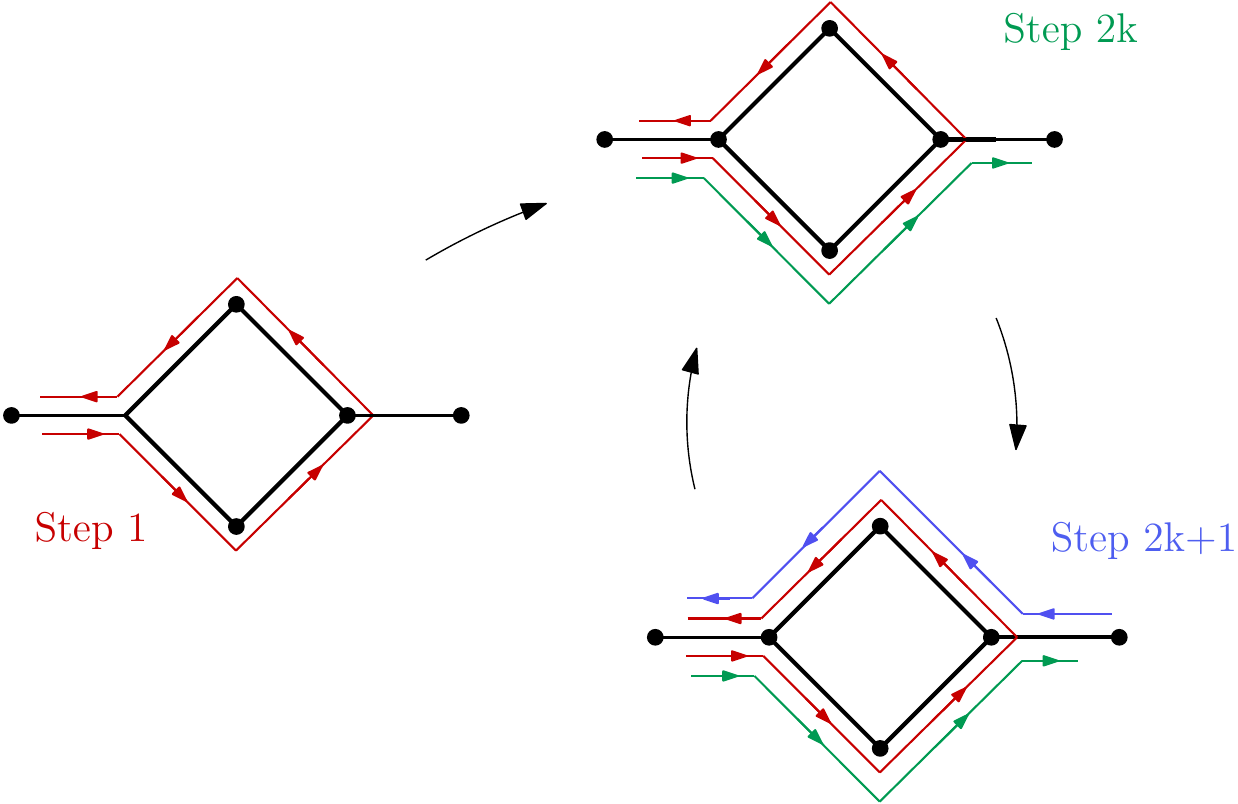}
\caption{LRV-e strategy on each component of the graph $G_D$.} \label{pattern-LRVe}
 \end{figure}
In other words, the first time a component is traversed, the path
changes direction and goes back to the start. The rest of the times
when the component is traversed, the direction does not change.
Thus, in order to traverse the $i$th component in the chain, we need
to traverse the first $i-1$ components in the chain.
The total time, i.e. the number of steps, to reach the rightmost vertex in the chain comes to
$\sum_{i=1}^{n/6}(i-1)=\frac{n(n-6)}{12}=\Theta(n^2)$.
\qed
\end{proof}

\section{Worst-Case Behavior of LFV-v and LFV-e}
\label{sec:LFV}

First, we provide evidence that a polynomial upper bound on the worst-case latency (i.e. time between consecutive
visits)
is unlikely for LFV-v.
We start by showing some interesting properties of graphs explored under LFV-v.
It would seem at first that the nodes in a path followed by the robot form a non-increasing
sequence of frequency values.
This is so as we seemingly always select a node of lowest frequency. However, if
all neighbors of a node have the same or higher frequency, then
the destination node will have strictly larger frequency than the present
node (see Figs.~\ref{path} and \ref{staircase}).

\begin{figure}\centering
 \includegraphics[width=0.47\textwidth]{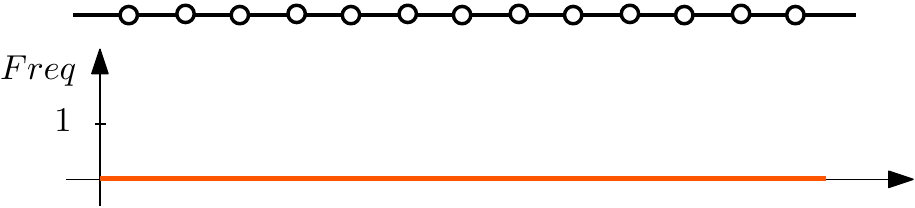}\hspace{0.3cm}
 \includegraphics[width=0.47\textwidth]{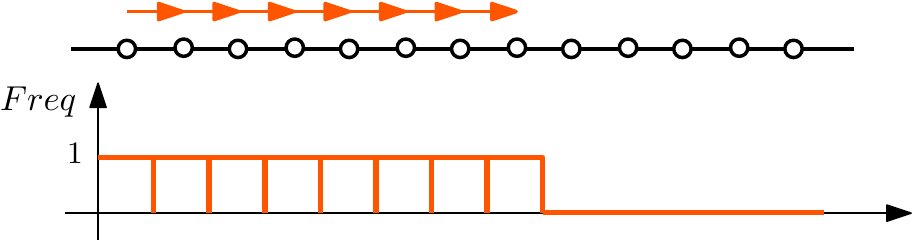}
\caption{A path being traversed from left to right with its frequency histogram below.
Initially all nodes have frequency zero. Then half way through the path traversal
nodes to left have frequency 1 and nodes to the right are still at zero.%
} \label{path}
 \end{figure}

%

\begin{figure}\centering
 \includegraphics[width=0.45\textwidth]{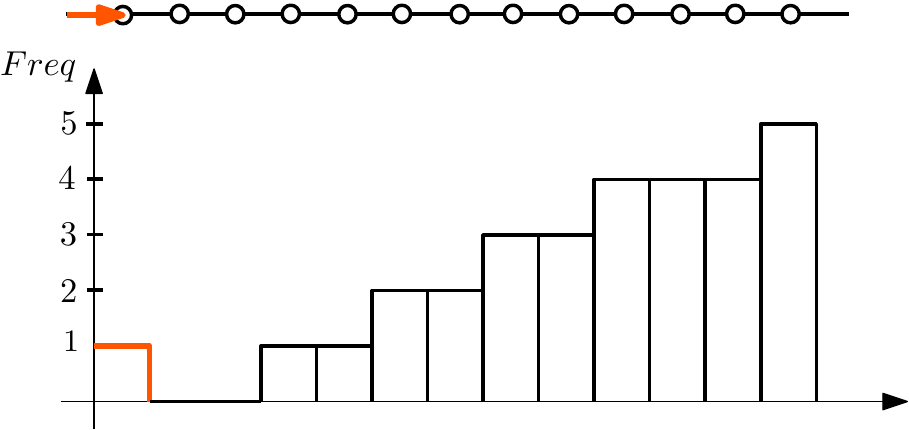} \hspace{0.5cm} \includegraphics[width=0.45\textwidth]{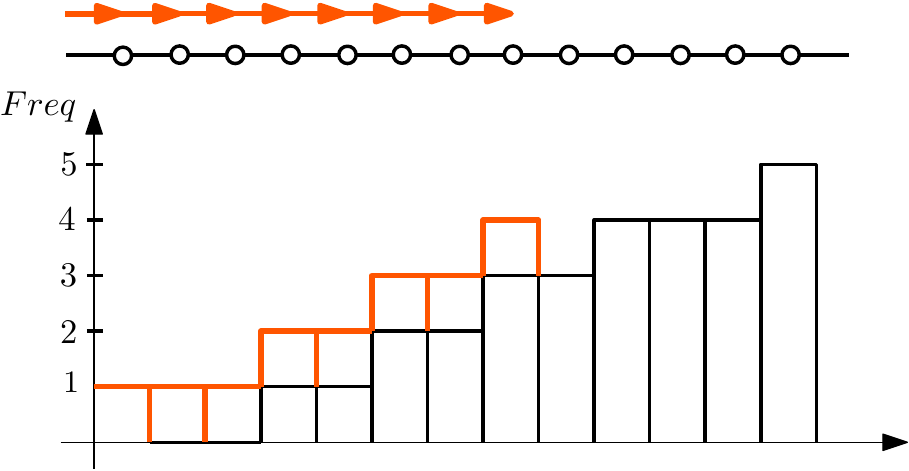}
\caption{A path with a corresponding staircase pattern in the histogram.} \label{staircase}
 \end{figure}

We also observe that it is possible to create dams or barriers by having a
flower configuration  in the path (see Fig.~\ref{petals}). We reach
the center of the flower and then take the loops or petals, thus increasing the
count of the center (see histogram on Fig.~\ref{petals}).
Then the robot moves past the center node of the flower, which forms
a barrier that impedes the robot from traversing from right to left past the
center of the flower, until the count of the nodes to the right of the path has
risen to match that of the barrier.
\begin{figure}\centering
 \includegraphics[width=0.90\textwidth]{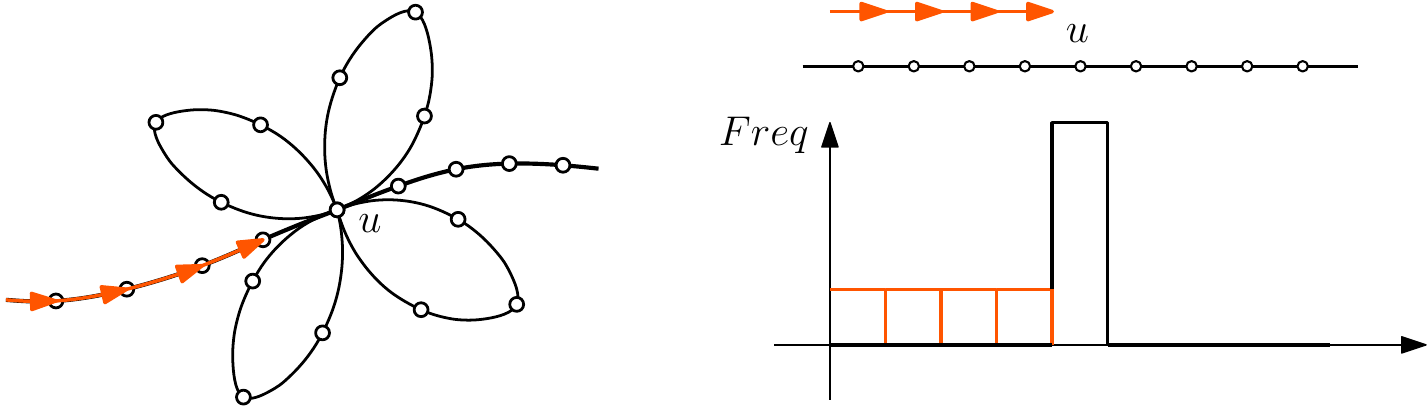}
\caption{A path with a ``flower'' configuration which creates a barrier.%
} \label{petals}
 \end{figure}
With these three basic configurations (path, staircase, flower) in hand, we can combine them to create a
graph in which the starting node $s$ has $\delta(s)$ neighbors as shown in Fig.~\ref{freq}
where we can see that of the $\delta(s)$ neighbors $\delta(s)-1$ have simple paths
leading back to $s$. These paths go via a distinguished neighbor called $u$
which is shared by all the paths from which they connect by a single shared edge to $s$.
Each of the paths is a staircase with barriers (see Figs.~\ref{staircase} and \ref{petals}).
That is for each time we go from $s$ to one of the first $\delta-1$ neighbors
we then climb a staircase up to $u$. Then from $u$ we enter the other
staircases from the ``high'' side until stopped
by a barrier, which makes us return to $u$ and eventually
revisiting $s$ from this last neighbor. This shows the following theorem.

\begin{theorem}
\label{th:ratio}
There exists a configuration for LFV-v in which some neighbors of the starting
vertex have a frequency count of $k$, while the starting point has
a frequency count of $k\,\delta$. Moreover, the value of $k$ can be as high
as $\Theta(n/\delta)$.
\end{theorem}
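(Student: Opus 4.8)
The plan is to realize the configuration sketched in Fig.~\ref{freq} explicitly from the three gadgets (path, staircase, flower) and then to track the LFV-v trajectory through a carefully chosen invariant on the frequency counts. Concretely, I would let $s$ be a hub of degree $\delta$ with neighbors $w_1,\dots,w_\delta$; the first $\delta-1$ of these are the entry points of staircases (as in Fig.~\ref{staircase}) that all converge at the common node $u$, and each staircase carries a barrier of the flower type of Fig.~\ref{petals} placed near its $u$-end. The last neighbor $w_\delta$ serves as the return edge to $s$. The design goal of the gadget is that, once the robot has climbed from $s$ up to $u$ along one staircase, every attempt to descend a \emph{different} staircase from its high side is reflected by a barrier, so that the robot oscillates near $u$ and is repeatedly routed back across $s$ rather than escaping through a staircase.

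The core of the argument is a round-by-round induction. I would define a \emph{round} as one traversal taking the robot from $s$ up to $u$ and back, and prove the invariant that after $j$ rounds the count of $s$ equals $j\delta$ while each of the $\delta-1$ staircase neighbors has count exactly $j$. The inductive step combines the staircase property (the histogram is strictly increasing, so the greedy LFV-v rule forces a unique ascending route) with the barrier property (a barrier holds as long as the nodes just past it on the far side have strictly smaller count than the barrier node), which is precisely the situation the invariant maintains. Under this invariant each full round routes the robot back through $s$ a total of $\delta$ times, once for each incident edge, giving $s$ its $\delta$-fold increment, while the neighbors each absorb a single visit. I would verify preservation of the invariant by inspecting the greedy choice at $s$, at $u$, and at each barrier, ruling out any shortcut that would let the robot leave a staircase prematurely.

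The node count then yields the quantitative claim. A staircase that must survive $k$ rounds needs height $\Theta(k)$ and hence $\Theta(k)$ vertices, and each of its barriers contributes only a constant number of vertices per step; summing over the $\delta-1$ staircases gives $n=\Theta(k\delta)$ vertices, so the maximal sustainable value is $k=\Theta(n/\delta)$. At that point $s$ has count $k\delta$ and the staircase neighbors have count $k$, which is the asserted $\delta:1$ ratio.

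The main obstacle I anticipate is the rigorous justification that every barrier holds throughout all $k$ rounds, i.e., that the purely local LFV-v rule never prefers to push through a staircase before the counts on its far side have risen to meet the barrier. This is exactly where the flower gadget of Fig.~\ref{petals} must be pinned down quantitatively rather than merely qualitatively: for each barrier I would track the gap between the barrier node's count and the counts of the first few nodes beyond it, and show the invariant keeps this gap positive until the final round. The most delicate bookkeeping will be at the boundary cases --- the first round, where all counts start at zero and ties must be broken adversarially (as permitted by the footnote convention used for LRV-e), and the last round, where a barrier is finally breached and the process terminates.
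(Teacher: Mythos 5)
Your construction is the same as the paper's: a hub $s$ of degree $\delta$ whose $\delta-1$ staircase neighbors converge at a shared node $u$ guarded by flower barriers, with the $\delta$-fold visit asymmetry at $s$ and the $n=\Theta(k\delta)$ vertex count yielding $k=\Theta(n/\delta)$. The paper's own ``proof'' is only the descriptive paragraph preceding the theorem (Fig.~\ref{freq} plus the assertion that the combination of path, staircase, and flower works), so your round-by-round invariant and the explicit barrier bookkeeping you flag as the delicate point are, if anything, more detailed than what the paper supplies.
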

\begin{figure}\centering
 \includegraphics[width=0.65\textwidth]{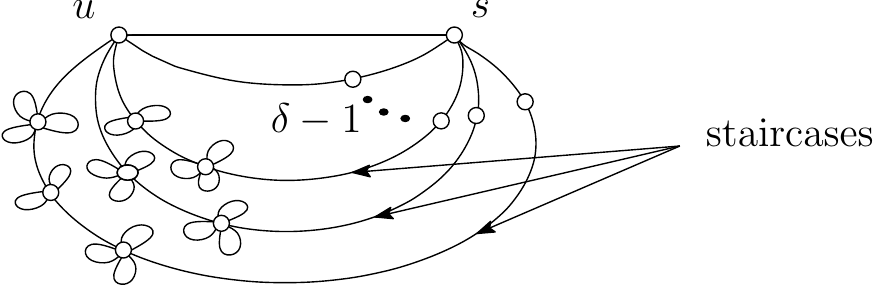}
\caption{A configuration in which the frequency of the starting point $s$ is
much larger than the majority of its neighbors.  } \label{freq}
 \end{figure}
This result provides some indication that the worst-case ratio between
smallest and largest frequency labels of vertices may be exponential,
which would arise if we could construct an example in which the ratio of the
respective frequencies of two neighbors is the degree $\delta$. Observe that $\delta$
can be $\Theta(n)$ in the worst case.
From this it can be shown that at most $\delta^d$ steps are required to explore the graph,
where $d$ and $\delta$ are the diameter and the maximum degree of the graph.

\begin{lemma}
Consider a graph $G_D$ explored using the
strategy LFV-v. Let $g$ denote the frequency of the starting node $s$ at time $t$
and let $\delta(s)$ be the number of neighbors of $s$. Then there are at least $g\bmod \delta$
neighbors with frequency at least $\lfloor g/\delta\rfloor +1$ and the remaining
neighboring nodes have frequency at least $\lfloor g/\delta\rfloor$.
\end{lemma}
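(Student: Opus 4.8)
The plan is to recast the statement as a balanced-allocation (\emph{always fill the least-loaded bin}) phenomenon driven by the robot's departures from $s$. First I would observe that every time the robot leaves $s$ it steps to a neighbor of currently \emph{minimum} frequency, so each departure increments exactly one neighbor-counter, namely a current minimizer. Since arrivals at and departures from $s$ strictly alternate (the walk leaves $s$, wanders, and can only raise the counter of $s$ by returning, giving the pattern $D,A,D,A,\dots$ starting with a departure), by the time the counter of $s$ reaches $g$ the robot has departed at least $g$ times. Here I take the convention that the initial placement of the robot on $s$ does not increment its counter; the other convention merely shifts $g$ by one. It therefore suffices to analyze the process on the $\delta=\delta(s)$ neighbor-counters in which at least $g$ increments are placed on a current minimizer, while keeping in mind that a neighbor may \emph{additionally} be incremented by visits arriving along its other incident edges.

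Write $q=\lfloor g/\delta\rfloor$ and $r=g\bmod\delta$. For the uniform lower bound I would fix any neighbor $c$ with final value $v$ and note that every least-loaded increment lands on a counter whose current value equals the running minimum, which is at most the current value of $c$, hence at most $v$. Each counter passes monotonically through the values $0,1,2,\dots$, so it offers at most one least-loaded slot at each value; counting slots at values $\le v$ gives at most $\delta(v+1)$ increments, and since $c$ itself must stop at $v$ it contributes only $v$ rather than $v+1$, yielding the sharper $g\le\delta(v+1)-1$. This forces $v\ge q$, i.e. \emph{every} neighbor ends at frequency at least $q$. External increments only raise counters, so they can never violate this bound; indeed the frequency-minimizing configuration uses no external increments and is exactly the balanced one.

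For the refinement — that at least $r$ neighbors reach $q+1$ — I would emphasize first why a conservation argument is \emph{not} enough: from $\sum_i f_i\ge g$ one only extracts $\sum_i (f_i-q)\ge r$, which is consistent with a single neighbor absorbing all the surplus. The dynamics must be used. Consider the first moment $t_q$ at which the running minimum reaches $q$; by the per-level capacity bound at most $\delta q$ least-loaded increments occur beforehand, so at least $g-\delta q=r$ of them occur after $t_q$, at a time when every neighbor already sits at $\ge q$. Let $z$ be the number of neighbors at exactly $q$ at time $t_q$. As long as any such level-$q$ neighbor survives, the minimum equals $q$, so every subsequent least-loaded increment must land on one and lift it \emph{permanently} to $\ge q+1$ (external increments can only lift them sooner). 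A short case analysis on whether the $\ge r$ post-$t_q$ least-loaded increments exhaust the $z$ level-$q$ neighbors — comparing $(\delta-z)+\min(z,\#\text{after})$ against $r$ in both regimes — shows that at least $r$ distinct neighbors end at $q+1$. Finally, since the true number of departures is at least $g$, the balanced distribution for $g$ departures is majorized by the actual configuration, so the two bounds stated in terms of $g$ follow.

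I expect the refinement to be the main obstacle, precisely because it is the part that genuinely needs the forced routing of the policy rather than a counting identity: one must argue that depositing each increment on a minimizer prevents the surplus from concentrating on a handful of neighbors. The secondary subtlety is the exact bookkeeping between the counter value $g$ of $s$ and the number of departures (an off-by-one tied to the start convention and to whether the robot currently occupies $s$); this is harmless because departures are always at least $g$ and the balanced lower bound is monotone in the number of increments, but this monotonicity should be spelled out explicitly rather than assumed.
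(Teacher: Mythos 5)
Your proof is correct, but it takes a genuinely different route from the paper's. The paper argues by induction on $g$: each time the frequency of $s$ increases by one, the robot departs to a minimum-frequency neighbor, and a two-case analysis (depending on whether $\lfloor g/\delta\rfloor$ ticks over) shows the invariant on the multiset of neighbor frequencies is preserved. You instead recast the departures as a least-loaded balls-in-bins process and prove the two bounds globally: the uniform bound $v\ge\lfloor g/\delta\rfloor$ via the slot-counting pigeonhole $g\le\delta(v+1)-1$, and the refinement via the first time the running minimum reaches $q$, after which the remaining $\ge g-\delta q=g\bmod\delta$ forced increments each permanently lift a distinct level-$q$ neighbor. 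Both arguments are sound; what yours buys is an explicit and clean treatment of "external" visits to the neighbors (arrivals along edges not incident to $s$), which the paper's induction absorbs silently into the "at least" phrasing, plus the slightly sharper inequality $g\le\delta(v+1)-1$; what the paper's buys is brevity. The one point to be careful about is the bookkeeping you yourself flag: whether the initial placement at $s$ counts toward $g$, and whether "time $t$" is taken before or after the departure that follows the $g$-th arrival. Under the convention where the initial placement increments the counter of $s$, only $g-1$ departures are guaranteed, and the bound you obtain is the one for $g-1$, which for $g\equiv 1\pmod\delta$ is strictly weaker than the statement; so the convention (or the monotonicity remark you defer) does need to be pinned down. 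The paper's own induction has exactly the same ambiguity --- it tacitly pairs each unit increase of $g$ with a subsequent departure --- so this is a shared looseness rather than a defect specific to your argument.
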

\begin{proof}
By induction on $g$. Denote as $g'=g-1$, $f=\lfloor g/\delta\rfloor$,
$f'=\lfloor g'/\delta\rfloor$.

\noindent \emph{Basis of induction.} $g=0.$ In this case $f=\lfloor 0/\delta\rfloor=0$, so we have
trivially at least $(0 \bmod \delta) =0$ nodes with frequency at least 1 and the rest of the nodes have frequency at least 0.
For good measure the reader may wish to prove the case $g=1$.

\noindent \emph{Induction step.} When $g$ increases by one, we have either (1) $f=f'$
or (2) $f=f'+1.$

\noindent In case (1) the robot explores a neighbor with min frequency at
least $f'=f$ whose frequency increases to $f+1$, thus increasing the
number of neighbors with that frequency by 1 (if no such neighbor
exists this means all neighbors already have frequency at least $f+1$
and hence it trivially holds that at least $(g \bmod \delta)$ neighbors have
frequency at least $f+1$).

\noindent In case (2) when $f=f'+1$ we have $\lfloor(g-1)/\delta\rfloor+1=\lfloor g/\delta\rfloor$ which
implies $(g \bmod \delta) = 0$ and $(g-1 \bmod \delta) = \delta-1$. Hence all but one
of the neighbors are guaranteed to have frequency at least $f'+1$
(which is equal to $f$) and there is at most one neighbor with frequency
$f'$ which is the min and gets visited thus increasing its frequency
to $f'+1=f$. This means that now all neighbors have frequency at least $f$
and trivially at least $(\delta(s) \bmod \delta)=0$ neighbors have frequency at
least $f+1$, as claimed.
\qed
\end{proof}

\begin{theorem}\label{thm:LFVv-delta-d}
The highest frequency node in a graph with unvisited nodes, using LFV-v, has frequency
bounded by $\delta^d$, where $\delta$ is the degree of the node and $d$ the
diameter of the graph.
\end{theorem}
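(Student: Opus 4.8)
The plan is to read the preceding Lemma as a device that propagates frequency lower bounds outward through the graph, and then to exploit the existence of an unvisited (hence frequency-zero) node to cap how large any frequency can be. The Lemma tells us that if a node $s$ has frequency $g$, then \emph{every} neighbor of $s$ has frequency at least $\lfloor g/\delta(s)\rfloor$, where $\delta(s)$ is the degree of $s$; since $\delta(s)\le\delta$ for the maximum degree $\delta$, every neighbor in fact has frequency at least $\lfloor g/\delta\rfloor$. I will only need this uniform lower bound, not the sharper count of $g\bmod\delta$ high-frequency neighbors.

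First I would fix the moment at which the graph still contains an unvisited node, let $s_0$ be a node of maximum frequency $F$, and let $w$ be any unvisited node, so that the frequency of $w$ is $0$. Because the diameter of the (connected) graph is $d$, there is a shortest path $s_0,s_1,\dots,s_\ell=w$ with $\ell\le d$. The core of the argument is to walk along this path applying the Lemma once per edge: if $s_{k}$ has frequency at least $b_k$, then its neighbor $s_{k+1}$ has frequency at least $\lfloor b_k/\delta\rfloor$.

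Next I would iterate this bound starting from $b_0=F$, using the integer identity $\lfloor \lfloor x/\delta\rfloor/\delta\rfloor=\lfloor x/\delta^2\rfloor$ (and its $k$-fold version giving $\lfloor x/\delta^k\rfloor$) to collapse the nested floors. This yields that the frequency of $s_k$ is at least $\lfloor F/\delta^k\rfloor$; in particular $w=s_\ell$ has frequency at least $\lfloor F/\delta^\ell\rfloor$. Since the frequency of $w$ is $0$, we obtain $\lfloor F/\delta^\ell\rfloor=0$, i.e. $F<\delta^\ell\le\delta^d$, which is the claimed bound $F\le\delta^d$ (indeed a strict inequality).

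The main obstacle, though a mild one, is justifying the iteration cleanly. Two points need care: that the Lemma's lower bound holds for \emph{every} neighbor (so that it may be applied to whichever neighbor lies on the chosen shortest path, not merely to a neighbor of minimum frequency), and that replacing each per-node degree $\delta(s_k)$ by the global maximum $\delta$ only weakens the bound and hence preserves validity. Once these are observed, the nested-floor simplification together with the frequency-zero unvisited node closes the argument immediately.
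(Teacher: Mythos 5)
Your proof is correct and follows essentially the same route as the paper's: take a shortest path (of length at most $d$) between an unvisited, frequency-zero node and the maximum-frequency node, and observe that the frequency can change by at most a factor of $\delta$ across each edge. The paper's version is a two-line sketch of exactly this argument; yours merely makes explicit the iteration of the preceding lemma's neighbor bound and the final nested-floor computation.
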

\begin{proof}
Consider any shortest path from an unvisited node to the node with highest frequency.
The path is of length at most the diameter $d$ of the graph. In each step
the increase in frequency is at most a factor $\delta$ over the unvisited node
hence the frequency of the most visited node is bounded by $\delta^d$.
\qed
\end{proof}
However, there is no known example of a dual of a triangulation graph
displaying this worst-case behavior.
\begin{theorem}\label{LFVv-lb}
There exist graphs with $n$ vertices of maximum degree 3, in which the largest
exploration frequency for a node, using LFV-v, is $\Theta(n^2).$
\end{theorem}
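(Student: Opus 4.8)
The plan is to construct a degree-$3$ dual graph of a triangulation that behaves like a single long ``accordion'': a spine path $s=v_0,v_1,\dots,v_m$ with $m=\Theta(n)$, where each spine vertex carries a constant-size petal (a short loop attached by a single edge, as in the flower of Fig.~\ref{petals}), so that every vertex keeps degree at most $3$ and the whole graph remains realizable as the dual of a triangulation in the style of Figs.~\ref{triang}--\ref{GD-LRVe}. The start is $s$ and the distinguished ``last node'' $t=v_m$ sits at the far end, behind the chain of petals. I would designate the first spine vertex $c=v_1$ as the bottleneck whose frequency I track, since every excursion from the pump side of the graph toward $t$ must cross $c$.

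Next I would analyze the LFV-v dynamics through the barrier/staircase picture already developed in the excerpt. Using the flower gadget, the robot turns each petal center into a barrier of a prescribed height, and the staircase phenomenon (Fig.~\ref{staircase}) means that advancing the explored frontier one vertex deeper into the chain requires first raising the count of the vertex facing the frontier up to the current barrier height. I would make this precise with an invariant on the frequency profile: when the frontier first reaches depth $i$, the counts along the explored prefix form a staircase whose value at $c$ equals $T_i$, and the barrier impeding further backward motion has height $\Theta(i)$. The central claim is then that moving the frontier from depth $i$ to depth $i+1$ forces the robot to re-cross $c$ a number of times proportional to the height it must rebuild, i.e.\ $\Theta(i)$ additional visits to $c$. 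This yields the recursion $T_i \ge T_{i-1} + \Theta(i)$ with $T_0=0$, whose solution is $T_m = \Theta(m^2) = \Theta(n^2)$; hence $c$ has been visited $\Theta(n^2)$ times by the moment $t$ is first reached.

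For the matching upper bound I would argue that the whole exploration of this graph costs only $O(n^2)$ steps --- each of the $\Theta(n)$ frontier-advancing phases contributes $O(n)$ moves, all of them confined to the prefix near $c$ --- so no vertex can exceed frequency $O(n^2)$; alternatively one can invoke the staircase monotonicity implied by the preceding lemma to bound the maximal count directly. Combining the two bounds gives $\Theta(n^2)$, as claimed. The main obstacle is the second paragraph: I expect the delicate part to be proving the profile invariant rigorously and, in particular, controlling the adversarial tie-breaking (the freedom noted in the excerpt when several neighbors share the lowest count) so that the robot is genuinely forced to rebuild a barrier of height $\Theta(i)$ --- rather than slipping past it cheaply --- each time the frontier advances. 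Getting the per-phase cost to be exactly $\Theta(i)$, and not merely $\Theta(1)$ (which would collapse the bound to $\Theta(n)$) nor super-linear, is the crux on which the quadratic single-node frequency rests.
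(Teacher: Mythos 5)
Your proposal diverges from the paper's proof, and the divergence exposes a genuine gap at exactly the point you flag as the crux. The paper proves this theorem on the chain of $n/6$ four-cycles of Fig.~\ref{GD-LRVe} --- the same graph as Theorem~\ref{th:lb.LRV-e} --- using the same accounting: by (adversarial) tie-breaking, the first traversal of a component reflects the robot back to the start and every later traversal passes through, so reaching component $i$ costs one full pass over components $1,\dots,i-1$, and the quadratic bound is the \emph{sum of pass lengths}, $\sum_{i=1}^{n/6}(i-1)=\Theta(n^2)$. That is a bound on total exploration time in the sense of Theorem~\ref{th:lb.LRV-e}; under these dynamics each individual vertex gains only $O(1)$ visits per pass, hence $\Theta(n)$ visits in all. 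You instead try to drive the count of one bottleneck vertex $c$ to $\Theta(n^2)$ via the recursion $T_i\ge T_{i-1}+\Theta(i)$, and the step that would justify it --- that advancing the frontier from depth $i$ to $i+1$ forces rebuilding a barrier of height $\Theta(i)$, with one re-crossing of $c$ per unit of height --- has no mechanism behind it. Under LFV-v an unvisited vertex has count $0$ and is a \emph{strict} minimum among the neighbors of a frontier vertex, so barriers never impede first visits; they only delay re-traversal of already-explored territory. Each reflection therefore costs one round trip, i.e., $\Theta(1)$ visits to $c$ per phase, and your recursion collapses to $T_m=\Theta(n)$ --- precisely the failure mode you anticipated in your last paragraph.

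The degree-$3$ constraint makes the missing mechanism unattainable: a spine vertex has two spine edges, so it can carry at most one pendant cycle (your ``petal attached by a single edge''), and a single excursion into that petal raises the spine vertex's count by only an additive constant before the petal is exhausted (its vertices then all have count $\ge 1$). Barriers of height $\Theta(i)$ at depth $i$ cannot be built from constant-size gadgets this way; indeed, the flower argument of Theorem~\ref{th:ratio} and the superpolynomial construction of Theorem~\ref{co:ub.LFV-v} both rely on vertices of large degree ($\Theta(\sqrt n)$ in the caterpillar), and the paper only \emph{conjectures} quadratic behavior for LFV-v at maximum degree $3$. To repair your argument, do what the paper does: keep the chain-of-cycles graph, let tie-breaking force exactly one reflection per newly reached component, and charge the $\Theta(i)$ cost of phase $i$ to the length of the return path (total time), not to the visit count of any single vertex.
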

\begin{proof}
This proof follows the outline of the proof of Theorem \ref{th:lb.LRV-e}
for the same graph $G_D$ represented in Fig.~\ref{triang}.
As illustrated in Fig.~\ref{LFVv}, each of $G_D$'s components is traversed
initially following the colored oriented paths from step 1 and further
alternating the paths from step $2k$ and $2k+1$.

%
 \begin{figure}[h]\centering
 \includegraphics[width=0.6\textwidth]{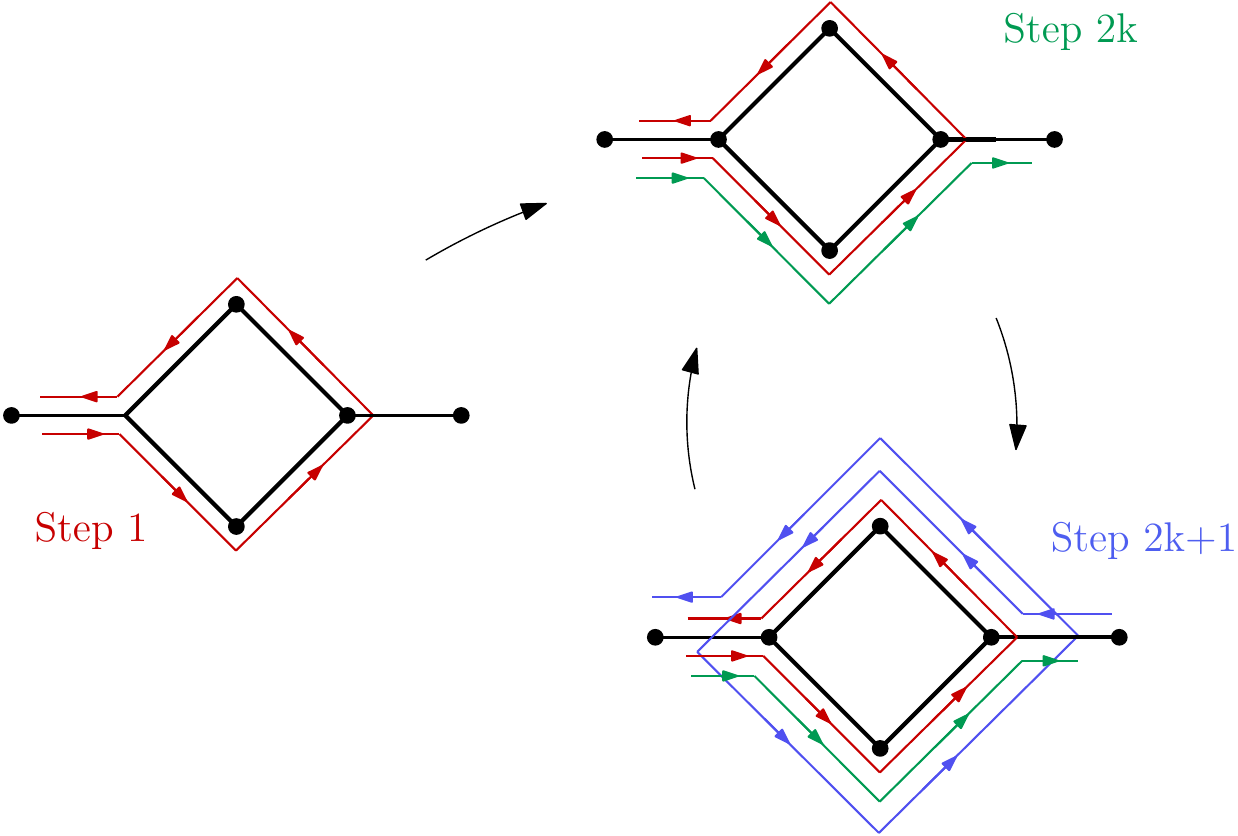}
\caption{LFV-v strategy on each component of the graph $G_D$.} \label{LFVv}
 \end{figure}
In other words, the first time a component is traversed, the path
changes direction and goes back to the start. The rest of the times
when the component is traversed, the direction does not change. Thus, in order to traverse the $i$th component in the chain, we need
to revisit the first $i-1$ components in the chain, which is $\Theta(n^2)$.
\qed
\end{proof}

Note that using LFV-e on the graph shown in Fig.~\ref{GD-LRVe}, each component of the graph is traversed using the exact same
strategy as shown in Fig.~\ref{LFVv} for LFV-v.
\begin{theorem}\label{LFVe-lb}
There exist graphs with $n$ vertices of maximum degree 3, in which
the largest exploration frequency for an edge, using LFV-e, is $\Theta(n^2)$.
\end{theorem}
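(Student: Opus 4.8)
The plan is to reuse the same graph $G_D$ from Fig.~\ref{GD-LRVe}, the chain of $n/6$ cycles of length $4$ connected in series with all vertices of degree at most $3$, and to exploit the observation already recorded just above the statement: under LFV-e, every component of $G_D$ is traversed in precisely the same manner as under LFV-v, as depicted in Fig.~\ref{LFVv}. Once this equivalence is in hand, the quadratic lower bound follows by transcribing the counting argument of Theorem~\ref{LFVv-lb} (which itself mirrors Theorem~\ref{th:lb.LRV-e}) essentially verbatim, with edge frequencies playing the role previously played by vertex frequencies.

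First I would establish the step-for-step equivalence between the edge-based and vertex-based rules on this specific structure. The crucial feature is that $G_D$ is assembled from length-$4$ cycles glued in series. When the robot sits at a branch vertex, LFV-e selects the incident edge of least frequency whereas LFV-v selects the neighbor of least frequency; although these rules may diverge on general graphs, I would show that the symmetry of each $4$-cycle forces the least-frequently-visited incident edge to be exactly the edge leading to the least-frequently-visited neighbor. Concretely, I would track both sets of counters through one full alternation of the colored oriented paths between step $2k$ and step $2k+1$ and verify that the two greedy rules commit the robot to the identical turn at every junction, so that the two trajectories coincide step for step.

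Having established that LFV-e reproduces the LFV-v trajectory exactly, I invoke the same combinatorial accounting as in Theorem~\ref{LFVv-lb}: the first time a component is entered the path reverses direction and returns to the start, whereas every subsequent entry passes straight through. Consequently, to traverse the $i$th component one must first revisit each of the preceding $i-1$ components, and the number of traversals required to reach the rightmost cycle is $\sum_{i=1}^{n/6}(i-1)=\Theta(n^2)$. Since the most-traversed edge inherits this count, and the construction attains it from above as well, the largest exploration frequency for an edge is $\Theta(n^2)$, as claimed.

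The main obstacle is the first step: the rigorous justification that the edge counters and the vertex counters dictate the same choice at each branch vertex. Everything after it is a direct restatement of the earlier quadratic argument, so the genuine content lies in confirming that the degree-$3$ structure built from $4$-cycles makes the two greedy rules indistinguishable along the traversal, which is exactly what licenses importing the LFV-v analysis without change.
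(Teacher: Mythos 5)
Your proposal matches the paper's approach exactly: the paper proves this theorem only via the one-line remark preceding it, namely that LFV-e on the chain of $4$-cycles from Fig.~\ref{GD-LRVe} traverses each component by the very same pattern as LFV-v in Fig.~\ref{LFVv}, and then the $\sum_{i=1}^{n/6}(i-1)=\Theta(n^2)$ count from Theorems~\ref{th:lb.LRV-e} and~\ref{LFVv-lb} carries over. Your plan to actually verify that the edge counters and vertex counters force the same turn at each junction supplies a detail the paper leaves implicit, but the route is the same.
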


\begin{theorem}
\label{th:ub.LFV-e}
~\cite{cik+-drwug-11} In a graph $G$ with at most $m$ edges and diameter $d$, the latency of each edge when carrying out
LFV-e is at most $O(m\cdot d)$.
\end{theorem}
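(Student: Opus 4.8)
The plan is to bound the latency by controlling how unevenly the edge counters can grow and then converting that control into a time bound, following the approach of Cooper et al.~\cite{cik+-drwug-11}. Write $c(e)$ for the traversal counter of edge $e$, so that after $t$ steps $\sum_e c(e)=t$, and recall that from any vertex $v$ the robot always departs along an incident edge of minimum counter. The first goal is a \emph{local balance} property: because the least-frequently-visited rule forces the departures at each vertex to be spread in a round-robin fashion, an edge incident to $v$ cannot be chosen a second time from $v$ until every other edge at $v$ has caught up. Consequently, whenever the robot is positioned at $v$ the counters of the edges incident to $v$ stay within a bounded window of one another. I would formalize this as the basic invariant and prove it by induction on the number of steps, splitting into the cases where the departing edge is the strict minimum or is tied, and tracking how a counter incident to $v$ can also be incremented from its far endpoint.

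The second and main step is to \emph{propagate} this local balance into a global bound, and this is where the diameter enters: I would show that at every moment $\max_e c(e)-\min_e c(e)=O(d)$. The idea is to walk along a shortest path (of length at most $d$) connecting any two edges and argue that consecutive edges sharing a vertex have counters differing by at most a constant at the relevant times, using a flow/parity argument on the directed traversal counts $c^+(x\to y)$ together with the conservation identity that at every vertex the number of entries and exits differ by at most one. Summing the per-edge discrepancy along the path then yields the claimed $O(d)$ spread.

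Finally, I would convert the spread bound into the latency bound. Consider two consecutive traversals of a fixed edge $e$, at times $t_1<t_2$; throughout the gap $c(e)$ stays equal to its value $k$ at time $t_1$. Since the spread is always $O(d)$ and $c_{\min}\le k\le c_{\max}$, every counter lies in the window $[k-O(d),\,k+O(d)]$ during the entire gap. Each step of the gap increments the counter of some edge other than $e$, and each such counter can rise by at most $O(d)$ before hitting the ceiling $k+O(d)$; summing over the $m$ edges bounds the length of the gap by $O(m\cdot d)$, which is exactly the desired latency. I expect the main obstacle to be the propagation step: turning the purely local round-robin behaviour into the global $O(d)$ guarantee requires careful handling of the robot's current position, of ties between equal counters, and of the bookkeeping of directed versus undirected traversals, and this is precisely where the diameter factor is paid; the local-balance and spread-to-latency steps are then comparatively routine.
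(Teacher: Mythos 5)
The paper does not actually prove this statement: Theorem~\ref{th:ub.LFV-e} is imported verbatim from Cooper et al.~\cite{cik+-drwug-11} and used here only as a black box (e.g., to derive Corollary~\ref{co:ub.LFV-e}), so the only meaningful comparison is with that cited source. Your outline reconstructs its argument faithfully: a local fairness property of the least-used rule at each vertex (correctly phrased via directed exit counts and the entry/exit conservation identity, since the undirected totals at a vertex are not by themselves round-robin --- an incident counter can also be pumped from its far endpoint), a per-hop bound on the discrepancy of counters of edges sharing a vertex that telescopes along a shortest path into a global spread of $O(d)$, and a counting step converting that spread into the $O(m\cdot d)$ latency. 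Your final conversion is airtight as written (during a gap the counter of $e$ is frozen at $k$, the spread sandwiches every counter in $[k-O(d),\,k+O(d)]$, and each step increments exactly one counter), and you correctly identify the propagation lemma as where all the real work and the diameter factor live; that lemma is precisely the technical core of~\cite{cik+-drwug-11}, so your plan has no structural gap, though as an outline it necessarily leaves that key lemma unproven.
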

This allows us to establish a good upper bound on LFV-e in our setting.
\begin{corollary}
\label{co:ub.LFV-e}
Let $G_D=(V_D,E_D)$ be the dual graph of a triangulation, with $|V_D|=n$ vertices and diameter $d$. Then
the latency of each vertex when carrying out LFV-e is at most $O(n\cdot d)$.
\end{corollary}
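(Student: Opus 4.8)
The plan is to derive Corollary~\ref{co:ub.LFV-e} as a direct specialization of Theorem~\ref{th:ub.LFV-e} to the class of dual graphs of triangulations. The general bound states that the latency of each edge under LFV-e is $O(m \cdot d)$, where $m$ is the number of edges and $d$ the diameter. For the dual graph $G_D=(V_D,E_D)$ of a triangulation, every vertex corresponds to a triangle and therefore has degree at most $3$, since a triangle shares edges with at most three neighboring triangles. The key structural observation I would exploit is that this bounded degree forces $m = |E_D|$ to be linear in $n = |V_D|$.

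First I would bound the number of edges. By the handshake lemma, $2|E_D| = \sum_{w \in V_D} \deg(w) \le 3n$, so $m = |E_D| \le \frac{3}{2}n = O(n)$. Substituting this into Theorem~\ref{th:ub.LFV-e} immediately yields an edge-latency bound of $O(m \cdot d) = O(n \cdot d)$. The second step is to translate this edge-latency bound into the vertex-latency bound claimed in the corollary. Here I would argue that visiting every edge incident to a vertex entails visiting the vertex itself; more precisely, whenever an edge with endpoint $w$ is traversed, the robot is located at $w$ either immediately before or immediately after the traversal, so the latency of vertex $w$ is at most the worst latency among its incident edges. Since each edge latency is $O(n \cdot d)$ and the degree is constant, the vertex latency inherits the same $O(n \cdot d)$ bound.

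The main obstacle, such as it is, lies not in the algebra but in cleanly justifying the edge-to-vertex latency transfer, since the original theorem is phrased in terms of edge visits rather than node visits. I would make this rigorous by noting that every incident edge traversal coincides with a visit to the vertex, so the gap between consecutive visits of a vertex is dominated by the gap between consecutive traversals of any fixed incident edge; taking the minimum over incident edges only tightens the bound. Because the degree is at most $3$, no additional factor accumulates in this reduction. Combining the linear edge count with the constant degree then gives the stated $O(n \cdot d)$ latency per vertex, completing the proof.
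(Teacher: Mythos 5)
Your proposal is correct and follows essentially the same route as the paper: bound $m$ by $O(n)$, apply Theorem~\ref{th:ub.LFV-e} to get edge latency $O(m\cdot d)=O(n\cdot d)$, and transfer this to vertices via the observation that traversing an incident edge visits the vertex. The only cosmetic difference is that you derive $m=O(n)$ from the degree-$3$ bound and the handshake lemma, whereas the paper invokes planarity of $G_D$; both are valid and the rest of the argument is identical.
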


\begin{proof}
Since $G_D$ is planar, it follows that $n\in\Theta(m)$, where $m=|E_D|$ is the number of edges
of $G_D$. Because patrolling an edge requires visiting both of its vertices, the claim follows from the
upper bound of Theorem~\ref{th:ub.LFV-e}.
\qed
\end{proof}

We note that this bound can be tightened for regions with small aspect ratio, for which
the diameter is bounded by the square root of the area.

\begin{corollary}
\label{co:aspect}
For regions with diameter $d\in O(\sqrt{n})$, the
latency of each dual vertex when carrying out LFV-e is at most $O(n^{1.5})$.
\end{corollary}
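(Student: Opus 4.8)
The plan is to obtain this bound by a direct substitution into the latency estimate of Corollary~\ref{co:ub.LFV-e}. That corollary already establishes that, for the dual graph $G_D$ of a triangulation on $n$ vertices with diameter $d$, the latency of each vertex under LFV-e is at most $O(n\cdot d)$. The present statement merely specializes this to the regime in which the hypothesis $d \in O(\sqrt{n})$ holds, which (as noted in the surrounding text) is exactly the situation of regions with small aspect ratio, where the diameter is bounded by the square root of the area.

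First I would invoke Corollary~\ref{co:ub.LFV-e} to get the $O(n\cdot d)$ latency bound, and then I would insert the assumption $d\in O(\sqrt{n})$ to conclude
$$O(n\cdot d)\;=\;O(n\cdot\sqrt{n})\;=\;O(n^{1.5}),$$
which is the claimed bound.

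Since the argument is a single substitution, there is essentially no obstacle; all the real work is carried out upstream, in Theorem~\ref{th:ub.LFV-e} and its planar specialization Corollary~\ref{co:ub.LFV-e}. The only step deserving a sentence of care is the status of the hypothesis itself: one should note that for a convex (or fat) decomposition with bounded aspect ratio the number of cells $n$ grows proportionally to the area, so that the geometric diameter of the region scales like $\sqrt{n}$, justifying $d\in O(\sqrt{n})$. With that observation in place the stated bound follows immediately.
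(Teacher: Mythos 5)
Your proposal is correct and matches the paper's (implicit) argument exactly: the corollary is a direct substitution of $d\in O(\sqrt{n})$ into the $O(n\cdot d)$ bound of Corollary~\ref{co:ub.LFV-e}, and your added remark about why bounded aspect ratio yields $d\in O(\sqrt{n})$ mirrors the paper's own motivating sentence.
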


\section{A Graph with Superpolynomial Exploration Time}

Koenig et al. gave a graph requiring superpolynomial exploration time, thus proving the theorem:
\begin{theorem}
\label{co:ub.LFV-v}~\cite{ksl+-ants-01}
LFV-v has worst-case exploration time $\Omega(n^{\sqrt{n}/2})$ on an $n$ vertex graph. This holds even
if the graph is planar and has sublinear maximum degree.
\end{theorem}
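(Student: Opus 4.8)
The plan is to amplify the single-level frequency gap of Theorem~\ref{th:ratio} by nesting that gadget recursively. Recall that in that construction a root vertex $s$ of degree $\delta(s)$ is forced to accumulate a frequency count of $k\,\delta$ while most of its neighbors remain at count $k$: the staircase-and-flower mechanism guarantees that each time the exploration frontier is pushed one step outward through $s$, the robot is first driven to cycle back through $s$ roughly $\delta$ times, because raising the minimum neighbor frequency by one requires visiting all $\delta$ neighbors, and the barrier prevents an early escape. I would treat each outward neighbor not as a bare return path but as the root of a fresh copy of the same gadget, obtaining an $\ell$-level recursive amplifier in which the factor-$\delta$ blow-ups compound multiplicatively.

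First I would formalize the amplification as a recurrence. Let $F_i$ denote the number of visits the robot must make to the root of a level-$i$ gadget before the exploration frontier first passes beyond it. The flower barrier ensures that the frontier cannot cross a level-$i$ root until the count on that root has risen to match the barrier, and by the mechanism of Theorem~\ref{th:ratio} this forces the level-$(i-1)$ subgadget hanging off it to be traversed a factor of $\Theta(\delta)$ more often. This yields $F_i \geq \delta\cdot F_{i-1}$ with $F_0=\Theta(1)$, hence $F_\ell \geq \delta^{\ell}$, and the total exploration time is at least $F_\ell$.

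Next I would balance the two parameters against the vertex budget. Charging $\Theta(\delta)$ vertices to each level (one staircase-and-flower per branch) and laying the $\ell$ levels out in a chain uses $\Theta(\ell\,\delta)$ vertices, so to fit in $n$ vertices I would take $\delta=\Theta(\sqrt{n})$ and $\ell=\Theta(\sqrt{n})$. The maximum degree is then $\Theta(\sqrt{n})$, which is sublinear as required, while the exploration time is $\delta^{\ell}=(\sqrt{n})^{\sqrt{n}}=n^{\sqrt{n}/2}$, matching the claim exactly. Planarity is preserved since each staircase and flower is planar and the recursive copies can be embedded without crossings, just as the base gadget of Theorem~\ref{th:ratio} was drawn in the plane.

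The hard part will be verifying that the factor-$\delta$ amplification genuinely composes across levels without leakage. I must maintain a global invariant describing the frequency profile of \emph{every} node as a function of time, and check two opposing things: that while the robot raises counts inside a level-$(i-1)$ subgadget the counters elsewhere do not drift in a way that releases a barrier prematurely, and conversely that each barrier is set high enough — exploiting the $\Theta(n/\delta)$ headroom of Theorem~\ref{th:ratio}, and if necessary a slowly growing barrier height per level within the same $\Theta(\ell\,\delta)$ vertex budget — to survive the full $\delta^{\ell}$-step schedule. Making this bookkeeping precise, including the delicate case analysis of the LFV-v tie-breaking and barrier-crossing rules at the boundary between adjacent levels, is the technical core of the argument.
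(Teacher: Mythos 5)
Your high-level plan coincides with the paper's: both build a chain of $\Theta(\sqrt{n})$ gadgets, each of degree $\Theta(\sqrt{n})$, and argue that per-level multiplicative amplification compounds to $\delta^{\ell}=n^{\sqrt{n}/2}$ (the paper uses a caterpillar tree whose path node $i$ carries $b-i+1$ leaves; the leaves play exactly the role of your flower petals). The problem is that the one step you defer --- establishing the recurrence $F_i \geq \delta\cdot F_{i-1}$ and showing that the amplification ``composes without leakage'' --- is not a routine verification; it is essentially the entire content of the proof, and in the form you state it the recurrence does not describe what actually happens. Under LFV-v the exploration frontier does not cross each level once: the robot sweeps up and down the whole chain $\Theta(\ell)$ times, and on each sweep only some of the levels (determined by a parity- and degree-dependent condition) get multiplied by $\Theta(b)$, while the others merely tick up additively. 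The paper has to compute the frequencies explicitly as polynomials in $b$, prove the monotonicity claim $\mbox{freq}(i)>\mbox{freq}(i+2)$, and maintain Invariant~\ref{inv} together with the degree table to show that the polynomial degree of the last node's frequency grows by one per pass for $\ell-2$ passes. Nothing in your outline substitutes for this bookkeeping, and your own ``hard part'' paragraph concedes exactly this point.

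A second, more concrete worry: your accounting gives every level the same degree $\delta=\Theta(\sqrt{n})$, but the paper's construction relies on a strictly decreasing gradient of leaf-degrees ($b+c+1$ at the root, $b-i+1$ at path node $i$, $b+1$ at the end) precisely so that at each moment the two path-neighbors of the current node have frequencies in a controlled order, which is what drives the robot back toward the deep end of the chain instead of letting it escape to the root's last unvisited leaf early (which would end the exploration in polynomial time). With uniform gadgets that ordering is not guaranteed, so the gap is not merely unwritten detail: the construction itself needs the asymmetry, and the one-shot recurrence needs to be replaced by a per-pass degree-growth invariant of the kind the paper proves.
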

We illustrate a similar construction for completeness. The graph is a caterpillar tree where the central path has $\ell+2=\lfloor \sqrt{n} \rfloor$ vertices,
and without loss of generality we assume $\ell$ to be odd (see Fig.~\ref{fig:caterpillar}).
The root which we term node 0, has $b+c+1$ leaves (for some constant $c>10$ and value $b$ to be determined later)
as children plus one edge connecting to the path, for a total degree of $b+c+2$.
The $i$th node in the path has $b-i+1$ leaves as children and is connected
in a path; hence node $i$ has degree $b-i+3$, for $1\leq i \leq \ell$.
The last node in the path has $b+1$ leaves as children and degree $b+2$.

\begin{figure}[h]
\vspace{-0.3cm}
\centering
\hspace{-0mm}\includegraphics[height=.645\linewidth]{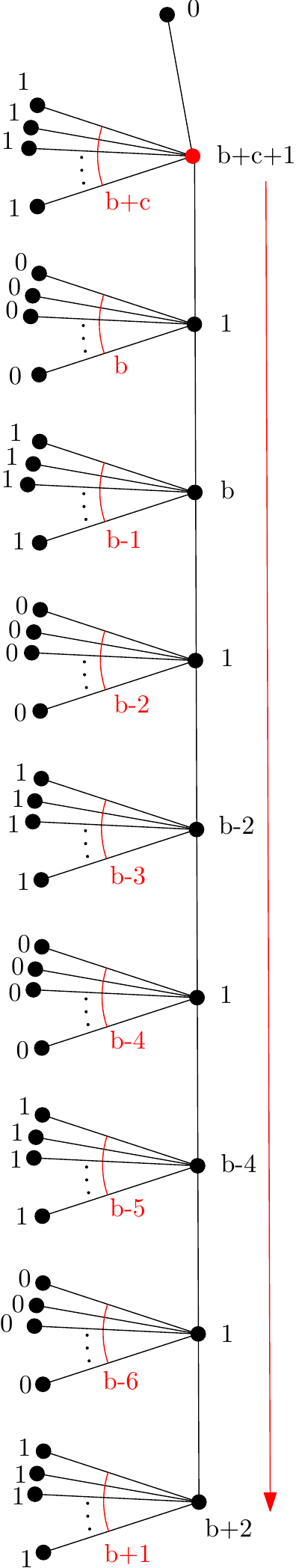}\hspace{-2mm}
\includegraphics[height=.645\linewidth]{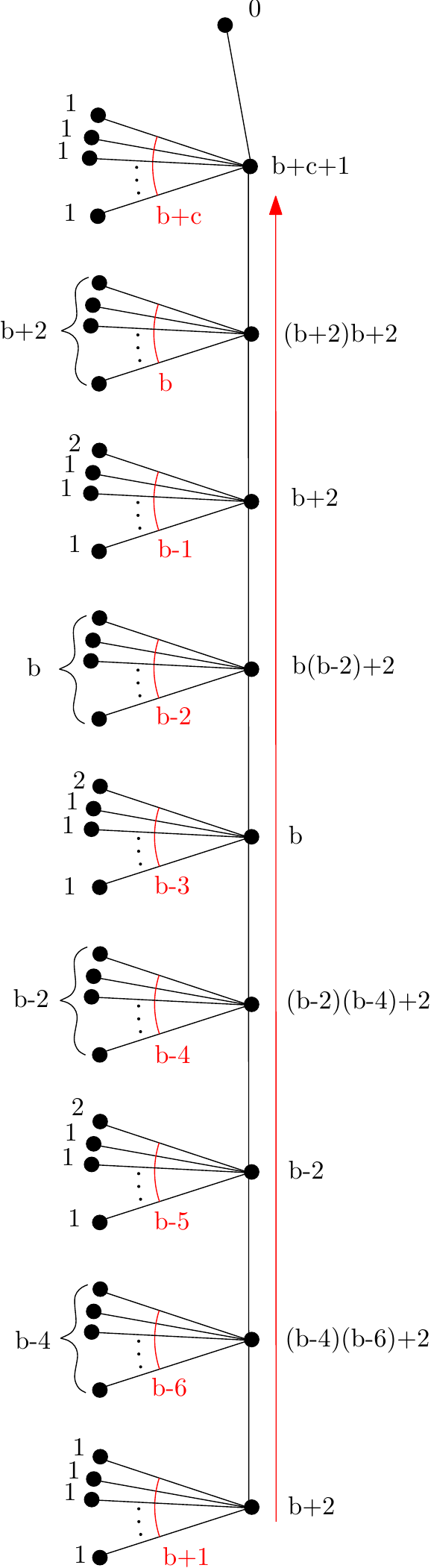}\hspace{-5mm}
\includegraphics[height=.645\linewidth]{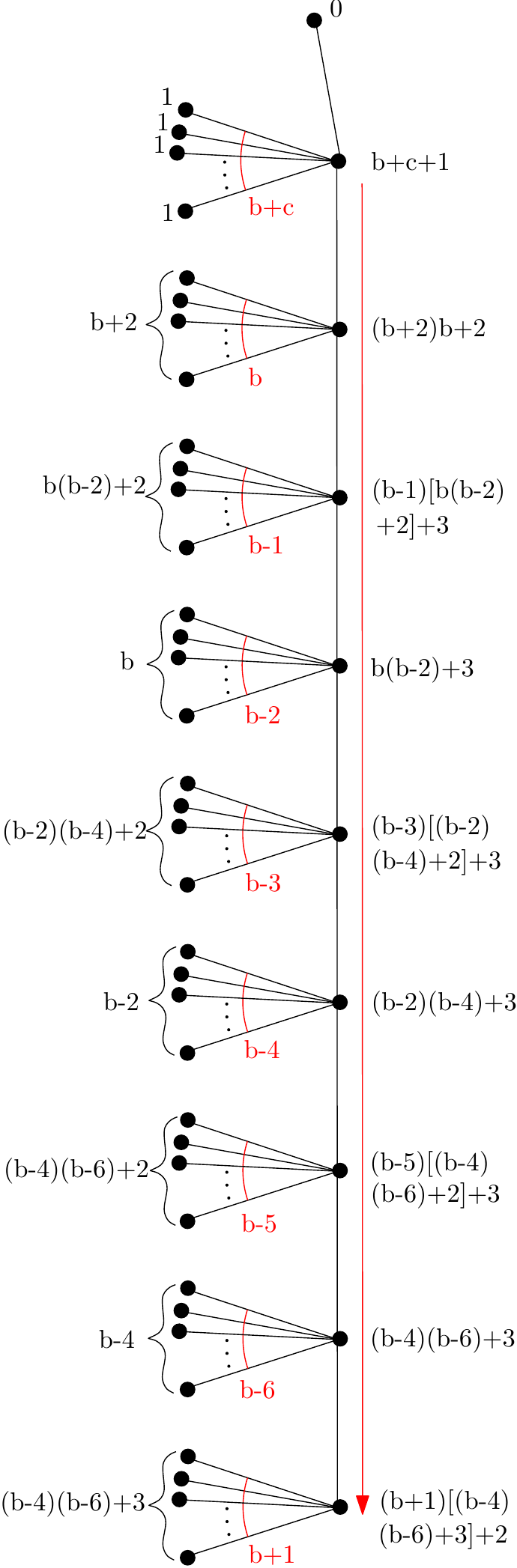}\hspace{-1mm}
\includegraphics[height=.645\linewidth]{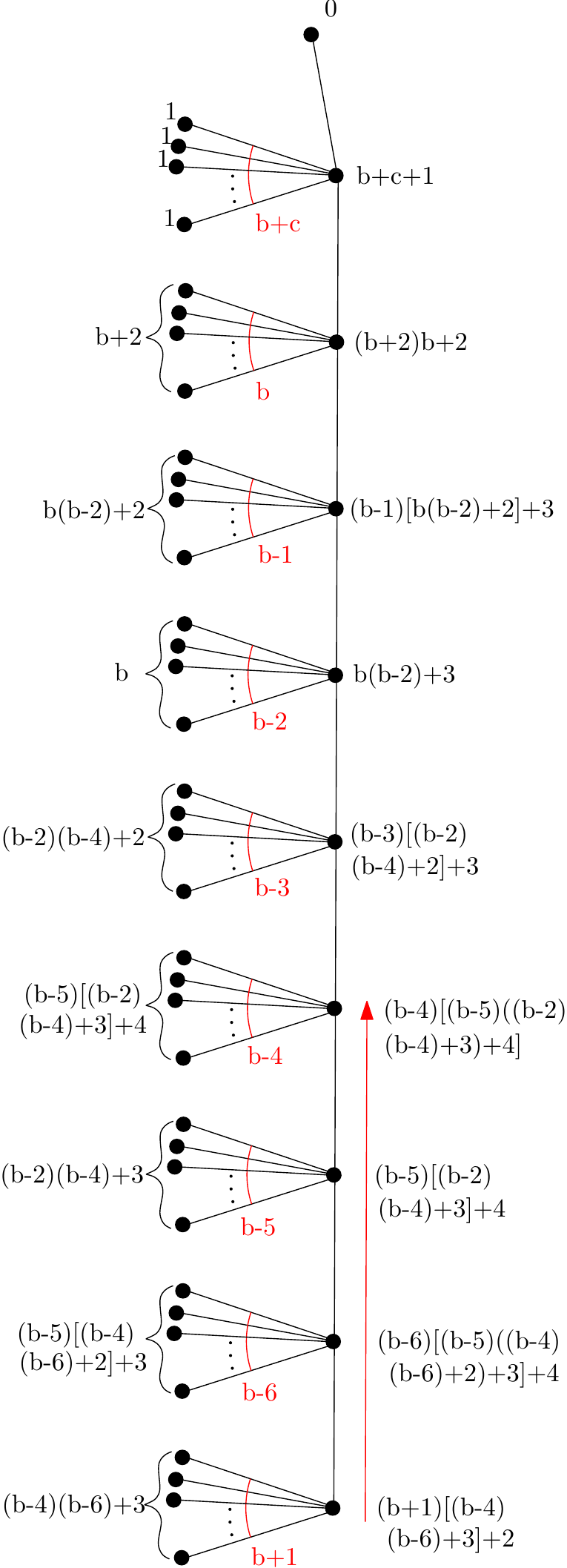}\hspace{-3.5mm}
\includegraphics[height=.645\linewidth]{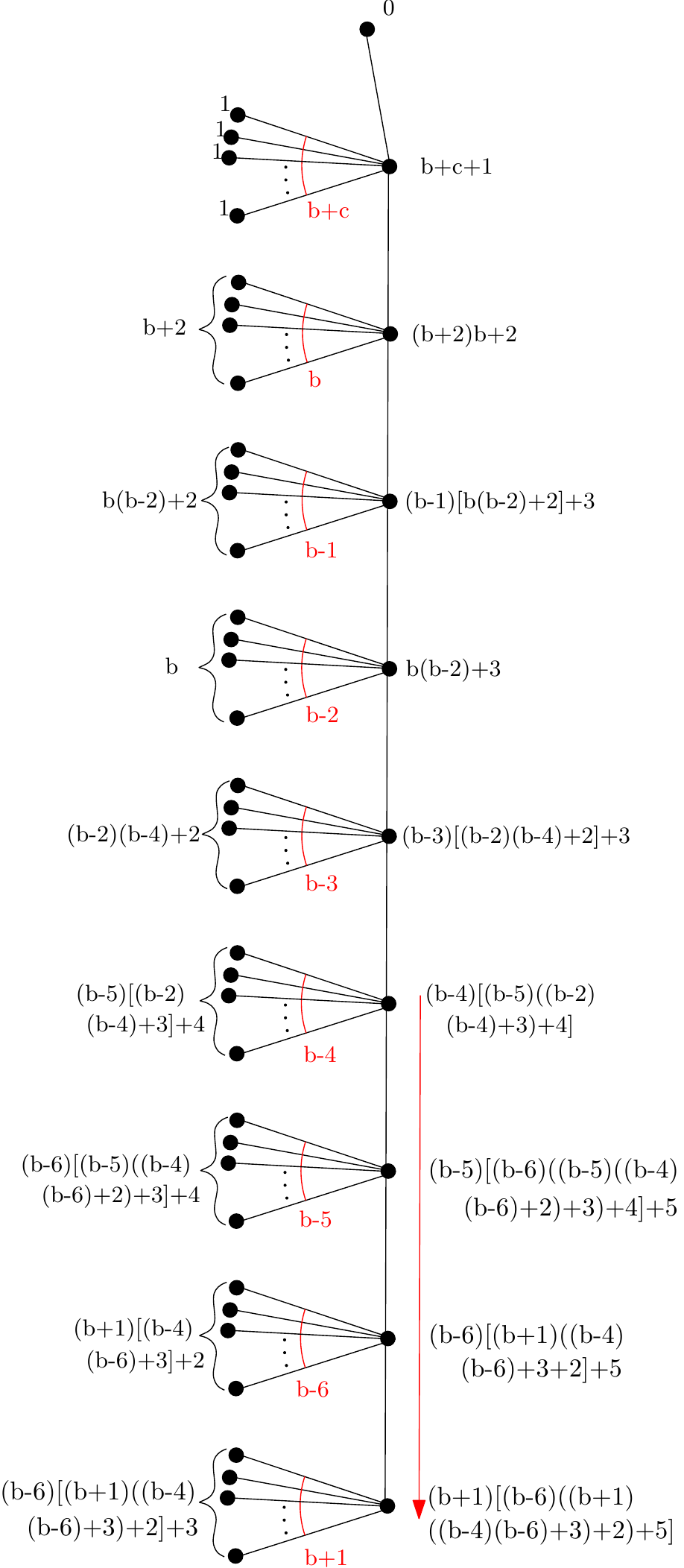}
\caption{
\label{fig:caterpillar}
A caterpillar tree requiring superpolynomial exploration time.}
\end{figure}


We start from the root and as all nodes have visit frequency 0 we can choose to visit the nodes in any arbitrary
order. Recall that this property holds whenever there are several neighboring nodes
with the lowest frequency. From the root we visit all leaves save one, thus increasing the frequency of the root
to $b+c+1$ and then proceed down the path. Then for a node $i$, $1\leq i \leq\ell$ in the path, if $i$ is odd
we arbitrarily follow the path leaving the leafs with frequency 0, while if $i$ is even
we visit all leaves first and then proceed down the path, thus increasing the frequency of node $i$ to $b-i+2$.
In the last node in the path we visit all of its $b+1$ children and then return to it with a final
frequency of $b+2$. More formally
\[
\mbox{freq}(i)=\left\{\begin{tabular}{ll}
  $b+c+1$ &\mbox{ if $i=0$ }\\
  $1$  &\mbox{ if $1\leq i \leq \ell$ is odd }\\
  $b-i+2$ &\mbox{ if $1\leq i \leq \ell$ is even }\\
  $b+2 $ &\mbox{ if $i=\ell+1$ }
\end{tabular}
\right.
\]
At this point the last node in the path has all of its neighbors of degree 1 so we
arbitrarily choose to move up the path as shown in Fig.~\ref{fig:caterpillar} (column 2). Observe that
node $\ell$ in the path has $b-\ell+1$ leaves with frequency 0 and the two neighbors in the
path have frequency $\Theta(b)$, so the robot will visit the leaves $\Theta(b)$ times until
it matches the smallest frequency of the two neighbors in the path, which in this
case is the node $\ell-1$ above with frequency $b-\ell+3$. This gives
a total frequency of
$(b-\ell+1)(b-\ell+3)+2.$
Up the path now, node $\ell-1$ has $b-\ell+2$ leaves of frequency 1 and a neighbor above it in the
path also of frequency 1. Since there is a tie we arbitrarily
visit one leaf once more
before proceeding upwards in the path thus leaving node $\ell-1$ with frequency
$b-\ell+6$. We repeat this pattern alternating between odd and even
thus obtaining the expression:
\[
\mbox{freq}(i)=\left\{\begin{tabular}{ll}
  $(b-i+1)(b-i+3)+2$  &\mbox{ if $i$ is odd }\\
  $b-i+6$  &\mbox{ if $i$ is even }
\end{tabular}
\right.
\]
Here we have the following crucial property:
\begin{claim}
The parent of a node in the path has higher frequency than its child in the path, i.e.
$\mbox{freq}(i)>\mbox{freq}(i+2)$ for $i\leq\ell-2$
\end{claim}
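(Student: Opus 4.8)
The plan is to prove the claim by direct substitution of the two closed-form expressions for $\mbox{freq}(i)$ just derived, together with one elementary observation: since $i$ and $i+2$ have the same parity, both $\mbox{freq}(i)$ and $\mbox{freq}(i+2)$ are governed by the \emph{same} branch of the piecewise formula. Hence the comparison never mixes the quadratic (odd) branch with the linear (even) branch, and the proof splits cleanly into two cases according to the parity of $i$.

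First I would treat odd $i$. Here $\mbox{freq}(i)=(b-i+1)(b-i+3)+2$ and $\mbox{freq}(i+2)=(b-i-1)(b-i+1)+2$, so after cancelling the additive constant the difference factors as $(b-i+1)\bigl[(b-i+3)-(b-i-1)\bigr]=4\,(b-i+1)$, which is strictly positive exactly when $b-i+1>0$. For even $i$ the computation is even shorter: $\mbox{freq}(i)-\mbox{freq}(i+2)=(b-i+6)-(b-i+4)=2>0$ unconditionally. Combining the two cases yields $\mbox{freq}(i)>\mbox{freq}(i+2)$ for every $i\le\ell-2$, as claimed.

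The only genuine obligation, and the step I would be most careful about, is the positivity condition $b-i+1>0$ in the odd case. This is not automatic from the algebra alone but follows from the construction: node $i$ is assigned $b-i+1$ leaves, so the design already forces $b-i+1\ge 1$ for all path indices $1\le i\le\ell$ (equivalently $b\ge\ell$), while the claim only ranges over $i\le\ell-2$. I would therefore confirm that the eventual choice of $b$ relative to $\ell$ and $n$ respects $b\ge\ell$; recall $\ell+2=\lfloor\sqrt n\rfloor$, so with $b=\Theta(\sqrt n)$ this holds with room to spare. I should stress that the substantive work lies not in this verification but in the inductive derivation of the two frequency formulas during the upward traversal, which precedes the claim; given those formulas, the monotonicity asserted here is a one-line consequence, and it is precisely this ordering $\mbox{freq}(i)>\mbox{freq}(i+2)$ that keeps each path node acting as a barrier as the robot climbs, driving the superpolynomial blow-up.
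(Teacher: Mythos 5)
Your proposal is correct and follows essentially the same route as the paper: substitute the two parity-specific formulas, factor the odd case to get $4(b-i+1)>0$, and observe the even case gives $2>0$. Your extra check that $b-i+1>0$ follows from the construction (node $i$ has $b-i+1$ leaves, and $b=\Theta(\sqrt{n})\ge\ell$) is a sensible addition the paper leaves implicit, but it does not change the argument.
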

\begin{proof}
Substituting in the expression above we have,
\begin{eqnarray*}
\mbox{freq}(i)-\mbox{freq}(i+2)
                                              &=&(b-i+1)\,4\,>\,0 \quad\quad\quad\quad \mbox{ if $i$ is odd }\\
\mbox{freq}(i)-\mbox{freq}(i+2)&=&(b-i+6)-(b-i+4) \,=\,2 \,>\, 0 \quad\mbox{ if $i$ is even }
\end{eqnarray*}
and the property holds as claimed.
\qed
\end{proof}
Observe that when node $i$ in the path is reached, node $i-1$ has larger frequency than
node $i+1$ hence the robot visits its leaf children $\mbox{freq}(i+1)$ many times, and ends up with a
frequency of the number of children times the frequency of node $i+1$ plus two.
%
%
\[
\mbox{freq}(i)=\left\{\begin{tabular}{ll}
 $b\,(b+2)+2$  &\mbox{ if $i=1$}\\
  $(b-i+1)(b-i+3)+3$  &\mbox{ if $1<i\leq\ell$ is odd }\\
  $(b-i+1)[(b-i)(b-i+4)+2]+3$  &\mbox{ if $1<i\leq \ell$ is even }\\
  $(b+1)[(b-\ell+1)(b+2)+3]+2$  &\mbox{ if $i=\ell+1$}
\end{tabular}
\right.
\]
Observe that the property that $\mbox{freq}(i)>\mbox{freq}(i+2)$ holds just as before for $i\leq\ell-3$.
Let us establish this as an invariant. First consider  $\mbox{freq}(i)$ as a polynomial in $b$ and we have
that the degrees form the sequence $1,2,3,2,3,\ldots,2,3$ as shown in column 3 of Table 1, and using
the claim above we get:
\begin{invariant}\label{inv}
If a node in the path has parent and child in the path with frequency of degree 3, then
the frequency of the parent is higher than the frequency of the child. I.e. if
$\mbox{deg$($freq$(i-1))$}= \mbox{deg$($freq$(i+1))$}$  then $\mbox{freq}(i-1)>\mbox{freq}(i+1)$ for $i\leq \ell-2$.
\end{invariant}
Using this invariant and Table 1 we can study the change in degrees as we proceed up the path, starting from node $\ell+1$ in column 4.
This node has frequency of degree 3 and its upper neighbor has frequency of
degree 2 so its new frequency is $\Theta(b^3)$ and its
frequency degree remains unchanged. Then we move to node $\ell$ whose
two neighbors have frequency degree 3 hence its new frequency
is $\Theta(b^4)$ as shown in column 4 of Table 1. At this point the degree of the child of the current node is higher than the degree
of the parent in the path (recall that Invariant \ref{inv} holds only for $i\leq\ell-2$) and hence we continue upward until we reach node $\ell-2$ whose
child and parent have degree 3. This node then increases its frequency to degree 4 and moves
down as per Invariant \ref{inv}.

\noindent Note that the last node visited in the upward path is given by the invariant. Specifically the degree of
the frequency of node $i$ at pass $j$ for $i\geq 1$ and $j\geq 3$ is
{\small
\[
\mbox{Deg}_{i,j}=\left\{\begin{tabular}{ll}
  $2$  &\mbox{if $i$ is even and $j<\ell-i$  }\\
  $3$  &\mbox{if $i$ is odd and $j<\ell-i$  }\\
  $j-\ell-i$ &\mbox{if $i$ is even, $j$ is odd and $j\geq \ell-i$} \\
  $j-\ell-i+1$ &\mbox{if $i$ is even, $j$ is even and $j\geq \ell-i$} \\
  $j-\ell-i+2$ &\mbox{if $i$ is odd, $j$ is odd and $j\geq \ell-i$} \\
  $j-\ell-i+1$ &\mbox{if $i$ is odd, $j$ is even and $j\geq \ell-i$} \\
  $j$ &\mbox{if $i=\ell+1$, $j$ is odd} \\
  $j-1$ &\mbox{if $i=\ell+1$, $j$ is even}
 \end{tabular}
\right.
\]
}

 \begin{table}
\begin{center}
\resizebox{0.53\linewidth}{!}{%
\begin{tabular}{lllllllllllll}
 	$\downarrow$& $\uparrow$& $\downarrow$& $\uparrow$&$\downarrow$& $\uparrow$&$\downarrow$& $\uparrow$&$\downarrow$& $\uparrow$&$\downarrow$& $\uparrow$&$\downarrow$\\
 $\quad$\vspace{-1mm}\\
	{\bf 1} $\,\,\,\,$&	1$\quad$& 1$\quad$&1$\quad$&1$\quad$&1$\quad$&1$\quad$&1$\quad$&1$\quad$&1$\quad$&1$\quad$&1$\quad$&1$\quad$\\
	\bf 0&	{\bf 2}&	2&	2&	2&		2&		2&		2&		2&		2&		2&		2&		2\\
	\bf 1&	\bf 1&	\bf 3&	3&		3&		3&		3&		3&		3&		3&		3&		3&		3\\
	\bf 0&	\bf 2&	\bf 2&	2&		2&		2&		2&		2&		2&		2&		2&		\bf 4&	\bf 4\\
      	\bf 1&	\bf 1&	\bf 3&	3&		3&		3&		3&		3&		3&		3&		3&		\bf 3&	\bf 5\\
  	\bf 0&    	\bf 2&   	\bf 2&	2&		2&		2&		2&		2&		2&		\bf 4&	\bf 4&	\bf 4&	\bf 4\\
	\bf 1&	\bf 1&	\bf 3&	3&		3&		3&		3&		3&		3&		\bf 3&	\bf 5&	\bf 5&	.\\
	\bf 0&	\bf 2&	\bf 2&	2&		2&		2&		2&		\bf 4&	\bf 4&	\bf 4&	\bf 6&	\bf 6&	.\\
	\bf 1&	\bf 1&	\bf 3&	3&		3&		3&		3&		\bf 3&	\bf 5&	\bf 5&	\bf 7&	\bf 7&	.\\
	\bf 0&	\bf 2&	\bf 2&	2&		2&		\bf 4&	\bf 4&	\bf 4&	\bf 6&	\bf 6&	\bf 8&	\bf 8\\
	\bf 1&	\bf 1&	\bf 3&	3&		3&		\bf 3&	\bf 5&	\bf 5&	\bf 7&	\bf 7&	\bf 9&	\bf 9\\
	\bf 0&	\bf 2&	\bf 2&	{\bf 4}&	\bf 4&	\bf 4&	\bf 6&	\bf 6&	\bf 8&	\bf 8&	\bf 10&	\bf 10\\
	\bf 1&	\bf 1&	\bf 3&	\bf 3&	\bf 5&	\bf 5&	\bf 7&	\bf 7&	\bf 9&	\bf 9&	\bf 11&	\bf 11\\
	\bf 0&	\bf 2&	\bf 2&	\bf 4&	\bf 4&	\bf 6&	\bf 6&	\bf 8&	\bf 8&	\bf 10&	\bf 10&	\bf 12\\
	\bf 1&	\bf 1&	\bf 3&	\bf 3&	\bf 5&	\bf 5&	\bf 7&	\bf 7&	\bf 9&	\bf 9&	\bf 11&	\bf 11
\end{tabular}}
\end{center}
\caption{Evolution of the degrees of the frequencies. Each column represents the
frequency degrees during a traversal of the path. The arrow indicates the direction of traversal.
In a downward trajectory the last node in the path is always reached, whereas in an upward trajectory
we indicate with bold the extent of the nodes visited in each pass. Rows are numbered $0,1,\ldots,\ell+1$ while columns are numbered $1,2,\ldots,\ell-2$.}
\vspace{-3mm}
\end{table}

The relation above has as exit condition when $j=\ell-2$. At this point the upward path reaches the top node and finally visits the leaf of the root
left unvisited in the very first step (see Fig.~\ref{fig:caterpillar}). The largest degree is attained in the last node of the path which has frequency of degree $j$.
Hence if we set $b=\sqrt{n}$ we obtain a polynomial of degree $\left(\sqrt{n}\right)^{\ell-2}=\Theta((n^{1/2})^{\sqrt{n}-2})=\Theta(n^{\sqrt{n}/2})$
thus proving Theorem~\ref{co:ub.LFV-v}.

\section{Conclusions}
In this paper we give (1) an exponential lower bound for the
worst case for LRV-v of triangulations (2) a quadratic lower bound for
the worst case for LRV-e of triangulations (3) an exact bound on the maximum degree
difference between two neighboring nodes in LFV-v (4) a quadratic lower bound for LFV-v
in graphs of degree 3 and, most importantly, (5) a superpolynomial lower bound for the
worst-case of LFV-v.

We conjecture that for graphs of maximum degree 3, the performance of LFV-v is quadratic
and its average coverage time is linear.

\bibliographystyle{plain}
\bibliography{lit,bibliography,mclurkin-bibliography}

\end{document}